\documentclass[11pt]{article}

\usepackage{amssymb, amsthm,amsmath, nicefrac}
\usepackage{graphicx, fullpage}
\usepackage[margin=2.5cm]{geometry}

\newcommand {\rounddown} [1] {{\lfloor {#1} \rfloor}}

\newcommand {\Exp} {\mathbb{E}}

\newcommand {\E} [1] {\Exp\left[#1\right]}

\newcommand{\given}{\,\mid\,}
\newcommand{\Given}{\;\;\mid\;\;}

\newcommand {\cx} {\leq_{cx}}

\newcommand {\fup} {\varphi^{\text{\tiny{$\mathbf{\uparrow}$}}}}
\newcommand {\fdown} {\varphi^{\text{\tiny{$\mathbf{\downarrow}$}}}}

\newcommand {\bbR} {\mathbb{R}}
\newcommand {\bbN} {\mathbb{N}}

\newcommand {\calI} {{\cal{I}}}
\newcommand {\calY} {{\cal{Y}}}

\newcommand {\calD} {{\cal{D}}}

\newcommand {\calP} {{\cal{P}}}

\newtheorem{theorem}{Theorem}[section]

\newtheorem{corollary}[theorem]{Corollary}

\newtheorem{lemma}[theorem]{Lemma}
\newtheorem{fact}[theorem]{Fact}
\newtheorem{claim}[theorem]{Claim}
\newtheorem{definition}[theorem]{Definition}
\newtheorem{remark}[theorem]{Remark}

\title{Solving Optimization Problems with Diseconomies of Scale via Decoupling\footnote{
The conference version of this paper appeared at FOCS 2014.}}
\author{Konstantin Makarychev\\Microsoft Research \and Maxim Sviridenko\\Yahoo! Labs}
\date{}

\begin{document}

\maketitle
\begin{abstract}
We present a new framework for solving optimization problems with a diseconomy of scale. In such problems, our goal is to minimize the cost of resources used to perform a certain task. The cost of resources grows superlinearly, as $x^q$, $q\ge 1$, with the amount $x$ of resources used. We define a novel linear programming relaxation for such problems, and then show that the integrality gap of the relaxation is $A_q$, where $A_q$ is the $q$-th moment of the Poisson random variable with parameter 1. Using our framework, we obtain approximation algorithms for the Minimum Energy Efficient Routing, Minimum Degree Balanced Spanning Tree, Load Balancing on Unrelated Parallel Machines, and Unrelated Parallel Machine Scheduling with Nonlinear Functions of Completion Times problems.

Our analysis relies on the decoupling inequality for nonnegative random variables. The
inequality states that
$$\big \|\sum_{i=1}^n X_i\big\|_{q} \leq C_q \,\big \|\sum_{i=1}^n Y_i\big\|_{q},$$
where $X_i$ are independent nonnegative random variables, $Y_i$ are possibly dependent
nonnegative random variable, and each $Y_i$ has the same distribution as $X_i$.
The inequality was proved by de la Pe\~na in 1990. De la Pe\~na,
Ibragimov, and Sharakhmetov showed that $C_q\leq 2$ for $q\in (1,2)$ and
$C_q\leq A_q^{\nicefrac{1}{q}}$ for $q\geq 2$. We show that the optimal constant is
$C_q=A_q^{\nicefrac{1}{q}}$ for any $q\geq 1$. We then prove a more general inequality:
For every convex function $\varphi$,
$$\E{\varphi\Big(\sum_{i=1}^n X_i\Big)} \leq \E{\varphi\Big(P\sum_{i=1}^n Y_i\Big)},$$
and, for every \emph{concave} function $\psi$,
$$\E{\psi\Big(\sum_{i=1}^n X_i\Big)} \geq \E{\psi\Big(P\sum_{i=1}^n Y_i\Big)},$$
where $P$ is a Poisson random variable with parameter 1 independent of the random variables $Y_i$.
\end{abstract}

\setcounter{page}{0}
\thispagestyle{empty} 
\pagebreak

\section{Introduction}
In this paper, we study combinatorial optimization problems with a diseconomy of scale. We consider problems in which we need to
minimize the cost of resources used to accomplish a certain task. Often, the cost grows linearly with the amount of resources used.
In some applications, the cost is sublinear e.g., if we can get a discount when we buy resources in bulk. Such phenomenon is known as ``economy of scale". However, in many applications the cost is superlinear.
In such cases, we say that the cost function exhibits a ``diseconomy of scale''. A good example of a diseconomy of scale is the cost of energy used for computing.
Modern hardware can run at different processing speeds. As we increase the speed, the energy consumption grows superlinearly. It
can be modeled as a function $P(s) = c s^{q}$ of the processing speed $s$, where $c$ and $q$ are parameters that depend on the specific hardware. Typically, $q\in (1,3]$ (see e.g., \cite{A,IK,power-ref}).

As a running example, consider the Minimum Power Routing problem studied by Andrews, Fern\'andez Anta, Zhang, and Zhao~\cite{AAZZ}. We are given a graph $G=(V,E)$
and a set of demands $\calD=\{(d_i, s_i, t_i)\}$. Our goal is to route $d_i$ ($d_i\in \bbN$) units of demand $i$ from the source $s_i\in V$ to the
destination $t_i\in V$ such that every demand $i$ is routed along a single path $p_i$ (i.e. we need to find an unsplittable multi-commodity flow).
We want to minimize the energy cost. Every link (edge) $e\in E$ uses $f_e(x_e)=c_e x_e^{q}$ units of power, where
$c_e$ is a scaling parameter depending on the link $e$, and $x_e$ is the load on $e$.

The straightforward approach to solving this problem is as follows. We define a mathematical programming relaxation that routes demands fractionally. It sends
$y_{i,p}d_{i}$ units of demand via the path $p$ connecting $s_i$ to $t_i$. We require that $\sum_p y_{i,p} = 1$ for every demand $i$.
The objective function is to minimize
$$\min \sum_{e\in E} c_e x_e^{q} = \min \sum_{e\in E} c_e \big(\sum_{p:e\in p} y_{i,p} d_i \big)^{q},$$
where $x_e = \sum_{p:e\in p} y_{i,p} d_i$ is the load on the link $e$. This relaxation can be solved in polynomial time, since the objective function
is convex (for $q \geq 1$). But,
unfortunately, the integrality gap of this relaxation is $\Omega(n^{q-1})$ \cite{AAZZ}. Andrews et al.~\cite{AAZZ} gave the
following integrality gap example. Consider two vertices $s$ and $t$ connected via $n$ disjoint paths. Our goal is to route 1 unit of
flow integrally from $s$ to $t$. The optimal solution pays 1. The LP may cheat by routing $1/n$ units of flow via $n$ disjoint paths. Then,
it pays only $n\times (1/n)^{q} = n^{1-q}$.

For the case of uniform demands, i.e., for the case when all $d_i=d$,
Andrews et al.~\cite{AAZZ} suggested a different objective function:
$$\min \sum_{e\in E} c_e \max\{x_e^{q}, d^{q - 1} x_e\}.$$
The objective function is valid, because in the integral case, $x_e$ must be a multiple of $d$, and thus $x_e^{q}\geq d^{q - 1} x_e$.
Andrews et al.~\cite{AAZZ} proved that the integrality gap of this relaxation is a constant. Bampis et al.~\cite{BKLLS} improved the bound to the \emph{fractional Bell number}
$A_{q}$ that is defined as follows: $A_{q}$ is the $q$-th moment of the Poisson random variable $P_1$
with parameter 1 (see Figure~\ref{fig:plot-Aq} in Appendix~\ref{sec:figures}). I.e.,
\begin{equation}\label{def:frac-Bell-number}
A_{q} = \E {P_1^{q}}=\sum_{t=1}^{+\infty}t^q \frac{e^{-1}}{t!}.
\end{equation}

For the case of general demands no constant approximation was known. The best known approximation due to Andrews et al.~\cite{AAZZ} was
$O(k+\log^{q-1} \Delta)$ where $k=|\calD|$ is the number of demands and $\Delta=\max_{i} d_i$ is the size of the largest
demand (Theorem 8 in \cite{AAZZ}).

In this work, we give an $A_{q}$-approximation algorithm for the general case and thus close the gap between the case of uniform
and non-uniform demands. Our approximation algorithm uses a general framework for solving problems with a diseconomy of scale which we present in this paper. We
use this framework to obtain approximation algorithms for several other combinatorial optimization problems. We give $A_{q}^{\nicefrac{1}{q}}$-approximation algorithm for Load Balancing on Unrelated Parallel Machines (see Section~\ref{section:LoadBalancing}), $2^qA_{q}$-approximation algorithm for
Unrelated Parallel Machine Scheduling with Non-linear Functions of
Completion Times (see Section~\ref{section:UnrelatedParallel}) and $A_{q}$-approximation algorithm for the Minimum Degree Balanced Spanning Tree problem (see Section~\ref{section:QST}). The best previously known bound for the first problem with $q\in[1,2]$
was $2^{1/q}$ (see Figure~\ref{fig:plot-results} for comparison). The bound is due to Kumar, Marathe, Parthasarathy and Srinivasan~\cite{KMPS}.
There were no known approximation guarantees for the latter problems.

In the analysis, we use the de la Pe\~na decoupling inequality~\cite{Pena90, Pena99}.
\begin{theorem}[de la Pe\~na~\cite{Pena90, Pena99}]
Let $Y_1,\dots, Y_n$ be jointly distributed nonnegative (non-independent) random variables, and
let $X_1,\dots, X_n$ be independent random variables such that
each $X_i$ has the same distribution as $Y_i$. Then, for every $q \geq 1$,
\begin{equation}\label{eq:de-la-Pena}
\big \|\sum_{i=1}^n X_i\big\|_{q} \leq C_q \,\big \|\sum_{i=1}^n Y_i\big\|_{q},
\end{equation}
for some universal constant $C_q$.
\end{theorem}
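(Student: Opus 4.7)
The plan is to deduce the existence of $C_q$ from the classical Rosenthal-type upper bound for sums of \emph{independent} nonnegative random variables, combined with two elementary lower bounds on $\|\sum_i Y_i\|_q$ that use only the marginal identifications $X_i\stackrel{d}{=}Y_i$ and the nonnegativity of $Y_i$. This route yields existence quickly (with a non-sharp constant) and isolates what extra structure is needed for the sharp $C_q=A_q^{1/q}$ announced in the abstract.

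\textbf{Rosenthal's upper bound.} For independent nonnegative random variables and $q\ge 1$, there is a constant $K_q$ such that
\begin{equation*}
\Bigl\|\sum_{i=1}^n X_i\Bigr\|_q \;\le\; K_q\,\max\!\left(\sum_{i=1}^n \E{X_i},\;\Bigl(\sum_{i=1}^n \E{X_i^q}\Bigr)^{1/q}\right).
\end{equation*}
For $q\ge 2$ this is Rosenthal's classical inequality; for $q\in(1,2]$ it can be recovered in a few lines starting from $\E{S^q}=\sum_i\E{X_i\,S^{q-1}}$ (with $S=\sum_i X_i$) and the concave subadditivity $(a+b)^{q-1}\le a^{q-1}+b^{q-1}$, splitting $S=X_i+S_{-i}$ and using independence, and then closing the recursion $\E{S^q}\le \sum_i\E{X_i^q}+(\sum_i\E{X_i})\,(\E{S^q})^{(q-1)/q}$ with Jensen. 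I would invoke the displayed inequality as a black box.

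\textbf{Matching lower bounds on the dependent side.} Both arguments of the max come for free from $Y_i\ge 0$ and $X_i\stackrel{d}{=}Y_i$. First, since $\|\cdot\|_1\le\|\cdot\|_q$ on nonnegative variables,
\begin{equation*}
\sum_i \E{X_i}\;=\;\sum_i \E{Y_i}\;=\;\E{\sum_i Y_i}\;\le\;\Bigl\|\sum_i Y_i\Bigr\|_q.
\end{equation*}
Second, iterating the pointwise inequality $(a+b)^q\ge a^q+b^q$ (valid for $a,b\ge 0$ and $q\ge 1$, e.g.\ via $(1+t)^q\ge 1+t^q$) gives $(\sum_i Y_i)^q\ge \sum_i Y_i^q$, so
\begin{equation*}
\Bigl\|\sum_i Y_i\Bigr\|_q^q\;\ge\;\sum_i\E{Y_i^q}\;=\;\sum_i\E{X_i^q}.
\end{equation*}
Combining with Rosenthal's bound yields $\|\sum_i X_i\|_q\le K_q\|\sum_i Y_i\|_q$, i.e.\ the theorem with $C_q\le K_q$.

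\textbf{Main obstacle.} The argument above is completely blind to the Poisson structure that governs the extremal case: take $Y_i=\mathbf{1}_{A_i}$ with disjoint events $A_i$ of probability $1/n$, so that $\sum_i Y_i\equiv 1$ but $\sum_i X_i\sim\mathrm{Bin}(n,1/n)$ and $\|\sum_i X_i\|_q^q\to A_q$ as $n\to\infty$. Sharpening $C_q$ to the optimal $A_q^{1/q}$ therefore requires the convex-order strengthening $\E{\varphi(\sum_i X_i)}\le\E{\varphi(P\sum_i Y_i)}$ for every convex $\varphi$, with $P\sim\mathrm{Poisson}(1)$ independent of $Y$; specializing $\varphi(x)=x^q$ and using $\E{P^q}=A_q$ returns the tight constant. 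Proving this convex-order inequality is the genuinely hard step, since it must simultaneously track the marginals of the $Y_i$ and the joint correlations inside $\sum_i Y_i$, and I expect the bulk of the paper's technical work to go there.
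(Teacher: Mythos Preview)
Your argument is correct and gives a clean existence proof: Rosenthal's upper bound for independent nonnegative summands, together with the two one-line marginal lower bounds $\sum_i\E{X_i}=\E{\sum_i Y_i}\le\|\sum_i Y_i\|_q$ and $\sum_i\E{X_i^q}=\sum_i\E{Y_i^q}\le\E{(\sum_i Y_i)^q}$, does yield $\|\sum_i X_i\|_q\le K_q\|\sum_i Y_i\|_q$. Your sketch for $1<q\le 2$ via $\E{S^q}=\sum_i\E{X_i S^{q-1}}$, concave subadditivity of $t\mapsto t^{q-1}$, and closing the recursion is also fine.

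This is, however, a genuinely different route from the paper's. The paper does not pass through Rosenthal at all; it proves the pointwise convex-order statement $\sum_i X_i\le_{cx} P\sum_i Y_i$ (Theorem~5.2) and then specializes $\varphi(t)=t^q$ to read off the sharp $C_q=A_q^{1/q}$. The two ingredients there are: (i) any Bernoulli $B$ is dominated in convex order by any integer-valued variable with the same mean, in particular by a Poisson (Lemma~5.7); and (ii) a reduction to mutually exclusive indicators by decomposing each $Y_i$ over the atoms $\{Y=y\}$ of the joint law, so that the independent sum is compared to $\sum_y(\sum_i y_i)B_y$ and then to $P\sum_i Y_i$ via a conditioning argument on $\sum_i P_i$ (Lemma~5.8). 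Your approach is lighter and modular, and it isolates exactly why existence is easy; the paper's approach is heavier but buys the optimal constant and, because it works at the level of convex order, simultaneously delivers the inequalities for arbitrary convex and concave test functions. Your final paragraph already correctly anticipates this.
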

De la Pe\~na, Ibragimov, and Sharakhmetov (\cite{PIS}, Corollary~3.4)
showed that $C_q \leq 2$ for $q\in [1,2]$, and
$C_q \leq A_q^{1/q}$ for $q\geq 2$. We give an alternative
proof of this inequality, and show that the inequality holds for $C_{q}=A_{q}^{1/q}$ for any $q\geq 1$, and moreover this bound is tight (for any $q\geq 1$). Thus, we improve the known upper bound for $C_q$ for $q\in (1,2)$.
\begin{theorem}\label{thm:decoupling}
Inequality~(\ref{eq:de-la-Pena}) holds for $C_{q}= A_{q}^{\nicefrac{1}{q}}$, where $A_{q}$ is the fractional Bell number (see~Equation~(\ref{def:frac-Bell-number}) and Figure~\ref{fig:plot-Aq}).
Moreover, $A_{q}^{\nicefrac{1}{q}}$ is the optimal upper bound on $C_{q}$.
\end{theorem}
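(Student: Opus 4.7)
The plan is to deduce Theorem~\ref{thm:decoupling} from the generalized convex inequality announced in the abstract: for every convex $\varphi:[0,\infty)\to\bbR$ and every coupling $(Y_1,\dots,Y_n)$ of the prescribed marginals,
\begin{equation}\label{eq:plan-convex}
\E{\varphi\big(\textstyle\sum_{i=1}^n X_i\big)} \;\leq\; \E{\varphi\big(P\textstyle\sum_{i=1}^n Y_i\big)},
\end{equation}
where $P\sim\text{Poisson}(1)$ is independent of everything else. Applied to $\varphi(t)=t^q$ (convex on $[0,\infty)$ for $q\geq 1$) and combined with the independence of $P$ from the $Y_i$, this yields
$$\E{\big(\textstyle\sum_i X_i\big)^q} \;\leq\; \E{P^q}\cdot \E{\big(\textstyle\sum_i Y_i\big)^q} \;=\; A_q\cdot \E{\big(\textstyle\sum_i Y_i\big)^q},$$
and taking $q$-th roots gives the upper bound $C_q\leq A_q^{1/q}$.

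The bulk of the work is therefore in proving~(\ref{eq:plan-convex}), which I would attack in two stages. Stage one is routine: introduce i.i.d.\ Poisson(1) variables $P_1,\dots,P_n$ independent of everything. Since $\E[P_iX_i\mid X_i]=X_i$, conditional Jensen gives $X_i\cx P_iX_i$ for each $i$; because convex order is preserved under convolution of mutually independent families, $\sum_i X_i \cx \sum_i P_iX_i$. Stage two, which is the main obstacle, is to establish $\sum_i P_iX_i\cx P\sum_i Y_i$ for every joint law of $(Y_i)$ sharing the given marginals. My plan for this step is to first reduce, via a standard limiting argument, to $X_i$ supported on finitely many rational atoms, and then further to the canonical scaled-Bernoulli case. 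In the Bernoulli case both sides admit explicit compound-Poisson descriptions, and I would attempt to write down a martingale coupling (equivalently, invoke Strassen's theorem), or alternatively condition on $\sum_i Y_i$ and prove the inequality pointwise. The delicate point is that $\sum_i P_iX_i$ is Poissonized independently per coordinate while $P\sum_i Y_i$ has only a single shared Poisson factor; the allowed dependence among the $Y_i$---in particular the possibility of forcing $\sum_i Y_i$ to be nearly constant---is exactly what must compensate for this difference.

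For the matching lower bound $C_q\geq A_q^{1/q}$, I would exhibit the following family parametrised by $n$. Let $X_1,\dots,X_n$ be i.i.d.\ with $\Prob{X_i=n}=1/n$ and $\Prob{X_i=0}=1-1/n$, pick $I$ uniform on $\{1,\dots,n\}$ independently of everything, and set $Y_i=n\cdot\mathbf{1}\{I=i\}$. Each $Y_i$ has the same marginal distribution as $X_i$, and $\sum_i Y_i=n$ almost surely, while $\sum_i X_i=nB_n$ with $B_n\sim\text{Binomial}(n,1/n)$. Since $B_n\leq n$ is uniformly integrable and converges in distribution to $P_1\sim\text{Poisson}(1)$, all its moments converge to those of $P_1$, so
$$\frac{\|\sum_i X_i\|_q}{\|\sum_i Y_i\|_q} \;=\; \E{B_n^q}^{1/q} \;\longrightarrow\; A_q^{1/q}\qquad (n\to\infty).$$
Hence no constant strictly below $A_q^{1/q}$ can satisfy~(\ref{eq:de-la-Pena}), establishing optimality and completing the plan.
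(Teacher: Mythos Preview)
Your reduction of the upper bound to the convex-order inequality~(\ref{eq:plan-convex}) and your optimality example are both correct and match the paper's approach. The gap is in your two-stage attack on~(\ref{eq:plan-convex}): Stage~2 as stated is \emph{false}. Take $n=2$ with $X_1,X_2$ i.i.d.\ Bernoulli$(1/2)$ and couple $Y_1,Y_2$ by $Y_2=1-Y_1$, so $Y_1+Y_2\equiv 1$. With $P_1,P_2$ i.i.d.\ Poisson$(1)$ one computes
\[
\E{(P_1X_1+P_2X_2)^2}=\tfrac{5}{2}\quad\text{while}\quad \E{(P\cdot(Y_1+Y_2))^2}=\E{P^2}=2,
\]
so $\sum_i P_iX_i\not\cx P\sum_i Y_i$ for this coupling. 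The original inequality $\sum_i X_i\cx P\sum_i Y_i$ does hold here ($\E{(X_1+X_2)^2}=3/2\le 2$), but your Stage~1 already overshoots: multiplying each $X_i$ by its own independent Poisson$(1)$ inflates the sum too much to be recoverable in Stage~2.

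The paper's route avoids this by Poissonizing with parameters tied to the \emph{joint} law of $Y$, not with i.i.d.\ Poisson$(1)$ factors. Concretely, write $Y_i=\sum_{y\in\calY} y_i\,\chi(Y=y)$ over the atoms of $(Y_1,\dots,Y_n)$; the indicators $\chi(Y=y)$ are mutually exclusive with probabilities summing to~$1$. In the key lemma one replaces the corresponding independent Bernoullis $B_y$ by independent Poisson variables $P_y$ with $\E{P_y}=\Pr(Y=y)$, so that $\sum_y P_y$ is exactly Poisson$(1)$. Conditioning on $\sum_y P_y=k$ and using $\E{P_y\mid \sum P_y=k}=k\,\E{P_y}$ then yields the bound $\sum_y \alpha_y B_y\cx P\sum_y\alpha_y\chi(Y=y)$. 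The point is that the total Poisson mass introduced is~$1$, not~$n$; your decomposition loses precisely this bookkeeping.
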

In fact we prove a more general inequality for arbitrary convex functions and an analogous inequality for concave functions.
In Section~\ref{sec:neg-depend} (see Corollary~\ref{cor:neg-assoc}), we extend this theorem to negatively associated
random variables~$X_i$.

\subsection{General Framework}
We now describe the general framework for solving problems with a diseconomy of scale. We consider optimization problems with $n$ decision
variables $y_1,\dots,y_n\in\{0, 1\}$. We assume that the objective function equals the sum of $k$ terms, where the $j$-th term is of the form
$$f_j\Big(\sum_{i=1}^n d_{ij} y_{i}\Big),$$
here $f_j$'s are nonnegative monotonically nondecreasing convex functions, $d_{ij}\geq 0$ are parameters. The vector $y=(y_1,\dots,y_n)$ must satisfy the constraint $y\in \calP$
for some polytope $\calP\subset [0,1]^n$. Therefore, the optimization problem can be written as the following boolean convex program (IP):
\begin{eqnarray}
\min &&\sum_{j\in [k]} f_j \Big(\sum_{i\in [n]} d_{ij} y_{i}\Big) \label{IP:obj}\\
&&y\in {\cal P} \label{IP:poly}\\
&&y\in \{0,1\}^{n} \label{IP:int}
\end{eqnarray}
We assume that we can optimize any linear function over the polytope ${\cal P}$ in polynomial time (e.g., $\cal P$ is defined by polynomially many
linear inequalities, or there exists a separation oracle for $\cal P$). Thus, if we replace the integrality constraint (\ref{IP:int}) with the
relaxed constraint $y\in [0, 1]^n$ (which is redundant, since $\calP\subset [0,1]^n$), we will get a convex programming problem that can be solved in polynomial time (see \cite{BV}). However, as we have seen in the example of Minimum Power Routing, the integrality gap of the relaxation can be as large as $\Omega(n^{q-1})$
for $f_j(t)=t^q$.

In this work, we introduce a linear programming relaxation of (\ref{IP:obj})-(\ref{IP:int}) that has an
integrality gap of $A_q$ for $f_j(t) = c_j t^q$ under certain assumptions on the polytope $\cal P$.
We define auxiliary variables $z_{jS}$ for all $S\subset [n]$ and $j\in [k]$. In the integral solution,
$z_{jS}=1$ if and only if $y_i= 1$ for $i\in S$ and $y_i= 0$ for $i\notin S$.
\begin{equation}
\min \sum_{j\in [k]} \sum_{S\subseteq [n]} f_j \left( \sum_{i\in S} d_{ij}\right)z_{jS} \label{R:obj}
\end{equation}
\begin{align}
y&\in {\cal P},\;\;\;\label{R:poly}\\
\sum_{S\subseteq [n]}z_{jS}&=1, & \forall j\in [k] \label{const1}\\
\sum_{S:i\in S}z_{jS}&=y_i,&\forall i\in [n], j\in [k] \label{const2}&{}\\
z_{jS}&\ge 0,& \forall S\subseteq [n], j\in [k] \label{R:rel3}
\end{align}
\begin{remark}
In the integral solution, $z_{j'S} = z_{j''S}$ for all $j'$ and $j''$. The reason why we introduced
many copies of the same integral variable $z_S$ to the LP is that the LP above is easier to solve than the LP with an extra constraint $z_{j'S} = z_{j''S}$.
\end{remark}

Optimization problem (\ref{R:obj})-(\ref{R:rel3}) is a relaxation of the original problem
(\ref{IP:obj})-(\ref{IP:int}). The LP has exponentially many variables.
We show, however, that the optimal solution to this LP can be found in polynomial time
up to an arbitrary accuracy $(1+\varepsilon)$.

We shall assume that all $d_{ij}$ are integral and polynomially bounded, or, more generally, that all $d_{ij}$ are multiples of
some $\delta_j$ and that $d_{ij}/\delta_j$ are polynomially bounded in $n$ and $1/\varepsilon$. Given an arbitrary instance of the problem, it is easy to round all $d_{ij}$'s to
multiples of a sufficiently small $\delta_j$, so that the cost of any solution changes by at most $(1+\varepsilon)$
assuming that functions $f_j$ satisfy some mild conditions. In Section~\ref{sec:discr} (Theorem~\ref{thm:discr}), we show how to pick $\delta_j$ for
functions $f_j$ satisfying the following conditions: 
\begin{enumerate}
\item Each $f_j$ is a convex increasing function; 
\item For each $j$, $f_j(0)=0$;
\item For each $j$ and $t\in(0,\sum_i d_{ij}]$,
$t(\log f_j(t))'\leq p(n)$, where $p$ is some polynomial. 
\end{enumerate}
Note, that for $f(t) = c t^{q}$, we have
$t(\log f(t))' = t(\log c t^q)' = t\cdot (q/t) = q$. So functions $f_j=c_j t^{q_j}$ satisfy the conditions of Theorem~\ref{thm:discr} if $q_j$ are polynomially bounded.

\begin{theorem}\label{thm:efficient}
Suppose that there exists a polynomial time separation oracle for the polytope ${\cal P}$, $f_j(t)$
is computable in polynomial time as a function of $j$ and $t$, and all $d_{ij}$ are multiples of $\delta_j$ such that $d_{ij}/\delta_j$ are polynomially
bounded   in $n$ and $1/\varepsilon$. Then, for every
$\varepsilon > 0$, there exists a polynomial time algorithm that finds a $(1+\varepsilon)$-approximately
optimal solution to LP (\ref{R:obj})-(\ref{R:rel3}).
\end{theorem}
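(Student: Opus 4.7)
The plan is to apply the ellipsoid method to a suitable dual, with a separation oracle provided by a subset-sum dynamic program. For a fixed $y\in \calP$, LP~(\ref{R:obj})--(\ref{R:rel3}) decomposes into $k$ independent inner LPs,
\[
 g_j(y) \;=\; \min\Big\{\sum_{S\subseteq[n]} f_j(L_{jS})\,z_{jS}\;:\;\sum_{S} z_{jS}=1,\;\sum_{S\ni i} z_{jS}=y_i\;\forall i,\; z_{jS}\geq 0\Big\},
\]
where $L_{jS}:=\sum_{i\in S} d_{ij}$. Each $g_j$ is a convex function of $y$ (as the value of an LP parametrized linearly in its right-hand side), and the original LP equals $\min_{y\in \calP}\sum_{j\in [k]} g_j(y)$. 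I would therefore produce value and subgradient oracles for each $g_j$ and feed them into an outer ellipsoid-type convex minimization over $\calP$.

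The inner LP has exponentially many variables but only $n+1$ equality constraints, so its dual has $n+1$ variables:
\[
 g_j(y) \;=\; \max\Big\{ \alpha + \sum_{i\in [n]} \beta_i\, y_i \;:\; \alpha + \sum_{i\in S}\beta_i \leq f_j(L_{jS})\text{ for all } S\subseteq [n]\Big\}.
\]
To separate this dual, I maximize $\sum_{i\in S}\beta_i - f_j(L_{jS})$ over $S$. The assumption that each $d_{ij}/\delta_j$ is polynomially bounded implies that $L_{jS}$ takes only polynomially many values; for each candidate value $L$, a standard $O(n\cdot L/\delta_j)$-time subset-sum DP computes $M(L):=\max\{\sum_{i\in S}\beta_i:\sum_{i\in S} d_{ij}=L\}$, and I then maximize $M(L)-f_j(L)$ over the polynomially many $L$ (using that $f_j$ is polynomial-time computable). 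Ellipsoid on this dual computes $g_j(y)$ to any desired precision and returns an optimal $(\alpha^\ast,\beta^\ast)$, whose $\beta^\ast$ component is a subgradient of $g_j$ at $y$ by LP duality.

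With these value/subgradient oracles for $\sum_j g_j$ together with the assumed separation oracle for $\calP$, I invoke the Gr\"otschel--Lov\'asz--Schrijver ellipsoid machinery for convex optimization to find a $(1+\varepsilon)$-optimal $y^\ast$ in time polynomial in the input and $\log(1/\varepsilon)$. To recover an explicit primal $z^\ast$, I re-run the inner ellipsoid at $y^\ast$; the polynomially many subsets $S$ it generates index a polynomial-size restriction of the original primal whose optimum equals the dual value computed above, and solving this restricted LP directly yields the required $z^\ast$.

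The main difficulty I anticipate is the nested precision analysis: the inner ellipsoid only approximates $g_j(y)$ and its subgradient, and these noisy quantities must still drive the outer ellipsoid to a $(1+\varepsilon)$-optimal solution of the overall problem. Both layers of ellipsoid are nevertheless standard instances of the polynomial equivalence between separation and optimization over convex sets, so no essentially new idea is required beyond the subset-sum DP that drives the inner separation oracle.
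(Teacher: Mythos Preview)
Your proposal is correct and follows essentially the same route as the paper: decompose the LP into per-$j$ inner problems $g_j(y)$ (the paper's $H_j(y)$), solve each via ellipsoid on the $(n{+}1)$-variable dual with a knapsack/subset-sum DP as separation oracle, extract subgradients from the dual optimum, and run an outer ellipsoid over $\calP$. Your explicit remark about recovering a primal $z^\ast$ from the polynomially many sets touched by the inner ellipsoid is a point the paper leaves implicit.
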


For a convex non-decreasing function $f$ define $A(f)$ as follows:
\begin{equation}\label{eq:def-general-A}
A(f) = \sup_{t>0} \Exp\Big[\frac{f(tP)}{f(t)}\Big],
\end{equation}
where $P$ is a Poisson random variable with parameter 1. Note that
$$A(c t^q) = \sup_{t>0}\Exp\Big[\frac{c (tP)^q}{ct^q}\Big] =\Exp[P^q] = A_q.$$

We prove the following theorem.

\begin{theorem}\label{thm:main}
Let $D_j= \{i: d_{ij} \neq 0\}$. Assume that there exists a randomized algorithm $R$ that given a $y\in \calP$, returns a random integral point $R(y)$ in $\calP\cap\{0,1\}^n$ such that
\begin{enumerate}
\item $\Pr(R_i(y) = 1) = y_i$ for all $i$ (where $R_i(y)$ is the $i$-th coordinate of $R(y)$);
\item Random variables $\{R_i(y)\}_{i\in D_j}$ are independent or negatively associated (see Section~\ref{sec:neg-depend} for the definition) for every $j$.
\end{enumerate}
Then, for every feasible solution $(y^*,z^*)$ to LP (\ref{R:obj})-(\ref{R:rel3}), we have
\begin{equation}\label{eq:thm:main}
\Exp\Big[\sum_{j\in [k]} f_j \Big(\sum_{i\in [n]} d_{ij} R_i(y^*)\Big) \Big]\leq
\sum_{j\in [k]} A(f_j) \sum_{S\subseteq [n]} f_j\Big( \sum_{i\in S} d_{ij}\Big)z^*_{jS},
\end{equation}
where $A(f_j)$ is defined as in (\ref{eq:def-general-A}).
Particularly, since LP (\ref{R:obj})-(\ref{R:rel3}) is a relaxation for IP (\ref{IP:obj})-(\ref{IP:int}),
if $(y^*,z^*)$ is a $(1+\varepsilon)$-approximately optimal solution to LP (\ref{R:obj})-(\ref{R:rel3}), then
$$\Exp\Big[\sum_{j\in [k]} f_j \Big(\sum_{i\in [n]} d_{ij} R_i(y^*)\Big) \Big]\leq (1+\varepsilon) \max_j(A(f_j)) \,IP,$$
where $IP$ is the optimal cost of the boolean convex program (\ref{IP:obj})-(\ref{IP:int}).
\end{theorem}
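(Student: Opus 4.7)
The plan is to apply the generalized convex decoupling inequality announced in the abstract (and proved earlier in the paper) term by term, after expressing the LP data in the right probabilistic language. Fix an index $j\in[k]$. From the rounding algorithm $R$, set $X_i = d_{ij} R_i(y^*)$; by hypothesis (2), these variables are independent (or negatively associated, via the extension in Section~\ref{sec:neg-depend}) when restricted to $i \in D_j$, and they vanish for $i \notin D_j$, so they satisfy the hypothesis of the decoupling theorem. Next, I would produce the "coupled" companion variables directly from the LP: let $\sigma \subseteq [n]$ be the random set drawn with probabilities $z^*_{jS}$ (a valid distribution thanks to constraint~(\ref{const1})), and define $Y_i = d_{ij}\,\mathbf{1}[i\in \sigma]$. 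Constraint~(\ref{const2}) gives $\Pr(i \in \sigma) = y_i^*$, so $Y_i$ takes the value $d_{ij}$ with probability $y_i^*$ and $0$ otherwise, i.e.\ $Y_i$ has exactly the same marginal distribution as $X_i$.

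With $X_i$ and $Y_i$ in hand, I would apply the convex decoupling inequality to the convex non-decreasing function $f_j$:
\[
\Exp\Big[f_j\Big(\sum_{i\in [n]} X_i\Big)\Big] \;\leq\; \Exp\Big[f_j\Big(P\sum_{i\in [n]} Y_i\Big)\Big],
\]
where $P$ is Poisson with parameter $1$, independent of the $Y_i$'s. The right-hand side is easy to evaluate by conditioning on $\sigma$: since $\sum_i Y_i = \sum_{i\in \sigma} d_{ij}$ is determined once $\sigma$ is fixed,
\[
\Exp\Big[f_j\Big(P\sum_i Y_i\Big)\Big] \;=\; \sum_{S\subseteq[n]} z^*_{jS}\, \Exp_P\!\Big[f_j\Big(P\!\sum_{i\in S} d_{ij}\Big)\Big].
\]

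Now I would invoke the definition of $A(f_j)$: for every $t\geq 0$ we have $\Exp[f_j(tP)] \leq A(f_j)\,f_j(t)$ (with the case $t=0$ following from $A(f_j)\geq 1$, which holds by Jensen applied to $P$ with mean $1$). Taking $t = \sum_{i\in S} d_{ij}$ and plugging in gives
\[
\Exp\Big[f_j\Big(\sum_i d_{ij} R_i(y^*)\Big)\Big] \;\leq\; A(f_j) \sum_{S\subseteq[n]} f_j\Big(\sum_{i\in S} d_{ij}\Big) z^*_{jS}.
\]
Summing over $j \in [k]$ yields (\ref{eq:thm:main}). The second statement then follows from the fact that LP (\ref{R:obj})--(\ref{R:rel3}) is a relaxation of the integer program, so any $(1+\varepsilon)$-approximate LP solution has value at most $(1+\varepsilon)\,\mathrm{IP}$, and pulling $\max_j A(f_j)$ out of the sum gives the stated bound.

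The conceptual step is identifying $(X_i)$ with the rounded solution and $(Y_i)$ with the LP distribution on subsets: once this identification is made, the proof reduces to a single invocation of the convex decoupling inequality followed by unpacking the definition of $A(f_j)$. The only place where anything subtle happens is handling the negatively associated case in hypothesis (2), which is where the extension in Section~\ref{sec:neg-depend} (Corollary~\ref{cor:neg-assoc}) is needed; for independent $R_i(y^*)$, the de la Peña inequality in its standard form (as strengthened in Theorem~\ref{thm:decoupling} and its convex-function generalization) applies directly.
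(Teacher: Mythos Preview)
Your proposal is correct and follows essentially the same approach as the paper: define $X_i=d_{ij}R_i(y^*)$, define $Y_i$ by sampling a set $S$ with probability $z^*_{jS}$, check that the marginals match via constraints~(\ref{const1})--(\ref{const2}), and invoke the decoupling inequality term by term. The only cosmetic difference is that the paper cites Corollary~\ref{cor:A-f-inequality} directly to get $\Exp[f_j(\sum X_i)]\le A(f_j)\,\Exp[f_j(\sum Y_i)]$ in one step, whereas you unfold that corollary inline (first bounding by $\Exp[f_j(P\sum Y_i)]$ via Theorem~\ref{thm:CX-main}, then conditioning on $\sigma$ and applying the definition of $A(f_j)$); your extra care with the $t=0$ case is a nice touch the paper omits.
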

\noindent This theorem guarantees that an algorithm $R$ satisfying conditions (1) and (2) has an approximation
ratio of $(1+\varepsilon) A_q$ for $f_j(t) = c_j t^q$.

In the next section, Section~\ref{sec:applications}, we show how to use the framework to obtain 
approximation algorithms for four different combinatorial optimization problems. Then, in Section~\ref{sec:thm:efficient},
we give an efficient algorithm for solving LP (\ref{R:obj})-(\ref{R:rel3}). In Section~\ref{sec:thm:main},
we prove the main theorem -- Theorem~\ref{thm:main}. The proof easily follows from the decoupling inequality,
which we prove in Section~\ref{sec:thm:decoupling}. Finally, in Section~\ref{sec:Generalizations},
we describe some generalizations of our framework.

\section{Applications}\label{sec:applications}
In this section, we show applications of our general technique. We start with the problem discussed in the introduction -- Energy Efficient Routing.
Recall, that Andrews et al. \cite{AAZZ} gave an $O(k+\log^{q-1} \Delta)$-approximation algorithm for this problem where $k=|\cal D|$ and
$\Delta=\max_{i\in\mathcal{D}} d_i$ (Theorem 8 in \cite{AAZZ}). We give an $(1+\varepsilon)A_{q}$-approximation algorithm for any fixed $\varepsilon>0$.

\subsection{Energy Efficient Routing}\label{section:routing}
We write a standard integer program. Each variable $y_{i,e}\in \{0,1\}$ indicates whether the edge $e$ is used to route the flow from $s_i$ to $t_i$.
Below, $\Gamma^+(u)$ denotes the set of edges outgoing from $u$; $\Gamma^-(u)$ denotes the set of edges incoming to $u$.
\begin{equation}
\min \sum_{e\in E} c_e \left(\sum_{i\in \mathcal{D}} d_i y_{i,e}\right)^{q_e} \label{Energy_obj}
\end{equation}
\begin{align}
\sum_{e\in\Gamma^+(u)} y_{i,e} &= \sum_{e\in\Gamma^-(u)} y_{i,e},  &  \forall i,\;  u\in V \setminus \{s_i,t_i\} \label{eqn:r2}\\
\sum_{e\in\Gamma^+(s_i)}y_{i,e}&=1, &  \forall i\; \label{eqn:r3}\\
\sum_{e\in\Gamma^-(t_i)}y_{i,e}&=1, &  \forall i\; \label{eqn:r4}\\
y_{i,e}&\in \{0,1\}, &\forall i,\;e\in E &\label{eqn:r5}
\end{align}

Using Theorem~\ref{thm:efficient}, we obtain an almost optimal fractional solution $(y,z)$ of
LP relaxation (\ref{R:obj})-(\ref{R:rel3})
of IP (\ref{Energy_obj})-(\ref{eqn:r5}). We apply randomized rounding in order to select a path for each demand.
Specifically, for each demand $i\in\mathcal{D}$, we consider the standard flow decomposition into paths: In the decomposition,
each path $p$ connecting $s_i$ to $t_i$ has a weight $\lambda_{i,p}\in\bbR^+$. For every edge $e$,
$\sum_{p:e\in p} \lambda_{i,p} = d_i y_{i,e}$; and $\sum_{p} \lambda_{i,p} = d_i$. For each $i$, the approximation algorithm
picks one path $p$ connecting $s_i$ to $t_i$ at random with probability $\lambda_{i,p}/d_i$, and routes all demands from $s_i$ to $p_i$
via $p$. Thus, the algorithm always obtains a feasible solution.

We verify that the integral solution corresponding to this combinatorial solution satisfies the conditions of Theorem~\ref{thm:main}.
Let $R_{i,e}(y)$ be the integral solution, i.e., let $R_{i,e}(y) = 1$ if the edge $e$ is chosen in the path connecting $s_i$ and $t_i$.
First, $R_{i,e}(y) = 1$ if the path connecting $s_i$ and $t_i$ contains $e$, thus
$$\Pr(R_{i,e}(y) = 1) = \sum_{p: e\in p} \lambda_{i,p}/d_i = y_{i,e}.$$
Second, the paths for all demands are chosen independently. Each $R_{i,e}(y)$ depends only on paths that connect $s_i$ to $t_i$. Thus all random variables
$R_{i,e}(y)$ (for a fixed $e$) are independent. Therefore, by Theorem~\ref{thm:main}, the cost of the solution obtained by the algorithm is bounded by
$(1+\varepsilon)A_{q}\,OPT$, where $OPT$ is the cost of the optimal solution to the integer program which is exactly equivalent to the
Minimum Energy Efficient Routing problem.

\subsection{Load Balancing on Unrelated Parallel Machines}
\label{section:LoadBalancing}
We are given $n$ jobs and $m$ machines. The processing time of the job $j\in [n]$ assigned to the machine $i\in [m]$ is $p_{ij}\ge 0$. The goal is to assign jobs to machines to minimize the $\ell_q$-norm of machines loads. Formally, we partition the set of jobs into $m$ sets $S_1,\dots, S_m$ to minimize $\left(\sum_{i\in [m]} (\sum_{j\in S_i} p_{ij})^q\right)^{1/q}$. This is a classical scheduling problem which is used to model load balancing in
practice\footnote{A slight modification of the problem, where the objective is $\min \sum_{i\in [m]} (\sum_{j\in S_i} p_{ij})^q$, can be used for energy efficient scheduling. Imaging that we need to assign $n$ jobs to $m$ processors/cores so that all jobs are completed by a certain deadline $D$.
We can run processors at different speeds $s_i$. To meet the deadlines we must set
$s_i = D^{-1} \sum_{j\in S_i} p_{ij}$. The total power consumption is proportional to
$D\times \sum_{i=1}^m s_i^q = D^{1-q} \times \sum_{i=1}^m \big(\sum_{j\in S_i} p_{ij}\big)^q$.
For this problem, our algorithm gives $(1+\varepsilon)A_q$ approximation.}.
It was previously
studied by Azar and Epstein~\cite{AE} and by Kumar, Marathe, Parthasarathy and Srinivasan~\cite{KMPS}. Particular, for $q\in (1,2]$ the best known approximation algorithm has performance guarantee $2^{1/q}$ \cite{KMPS} (Theorem 4.4).
We give $\sqrt[q]{(1+\varepsilon)A_q}$-approximation algorithm for any $\varepsilon>0$  substantially improving upon previous results (see Figure~\ref{fig:plot-results}).

We formulate the unrelated parallel machine scheduling problem as a boolean nonlinear program:
\begin{align}
\min \sum_{i\in [m]} \big(\sum_{j\in [n]} &p_{ij}x_{ij}\big)^q &\label{Sch_obj}\\
\sum_{i\in [m]}x_{ij}&=1, & \forall j\in [n] \label{Sch_eqn:r2}\\
x_{ij}&\in \{0,1\}, & \forall i\in [m],\, j\in [n] \label{Sch_eqn:r5}
\end{align}

Using Theorem~\ref{thm:efficient}, we obtain an almost optimal fractional solution $(x,z)$ of the LP relaxation (\ref{R:obj})--(\ref{R:rel3})
corresponding to the IP (\ref{Sch_obj})--(\ref{Sch_eqn:r5}). We use the straightforward randomized rounding:
we assign each job $j$ to machine $i$ with probability $x_{ij}$. We claim that, by Theorem~\ref{thm:main}, the expected cost of our integral solution
is upper bounded by $A_{q}$ times the value of the fractional solution $(x,z)$. Indeed, the probability that we assign a job $j$ to machine $i$ is
exactly equal to $x_{ij}$; and we assign job $j$ to machine $i$ independently of other jobs. That implies that our approximation algorithm has a performance guarantee of $\sqrt[q]{(1+\varepsilon)A_q}$ for the $\ell_q$-norm objective.

\subsection{Unrelated Parallel Machine Scheduling with Nonlinear Functions of Completion Times}
\label{section:UnrelatedParallel}
As in the previous problem, in Unrelated Parallel Machine Scheduling with Nonlinear Functions of Completion Times, we are given $n$ jobs and $m$ machines. The processing time of the job $j\in [n]$ assigned to the machine $i\in [m]$ is $p_{ij}\ge 0$. We need to assign jobs to machines and set their start times such that job processing intervals do not overlap. The goal is to minimize $\sum_{s=1}^nw_jC_j^p$ where $C_j$ is the completion time of job $j$ in the schedule and $p\ge 1$. Using classical scheduling notation this problem can be denoted as $R|| \sum_j w_j C_j^p$.

The problem $R|| \sum_j w_j C_j^p$ is well studied for $p=1$. It is known to be APX-hard \cite{HSW} while the best known approximation algorithm has a performance guarantee of $3/2$ \cite{SS,S}. For $p>1$ even the single machine scheduling problem is not understood: It is an open problem whether $1|| \sum_j w_j C_j^p$ is $\cal NP$-hard for $p>0$, $p\neq 1$. Bansal and Pruhs~\cite{BP} and Stiller and Wiese~\cite{SW}  gave constant factor approximation algorithms for more general functions of completion times for a single machine. However, there were no known approximation algorithms for multiple machines.
We show how to use our framework for this problem in Appendix~\ref{section:UnrelatedParallel-Tech}. Our algorithm gives $2^pA_p$ approximation.

\subsection{Degree Balanced Spanning Tree Problem}
\label{section:QST}

We are given an undirected graph $G=(V,E)$ with edge weights $w_e\ge 0$. The goal is to find a spanning tree $T$ minimizing the objective function
\begin{equation}\label{Tree:Objective}
f(T)=\sum_{v\in V} \left(\sum_{e\in \delta(v)\cap T} w_e \right)^q,
\end{equation}
where $\delta(v)$ is the set of edges in $E$ incident to the vertex $v$. For $q=2$, a more general problem was considered before in the Operations Research literature \cite{tree1, tree2, tree3, tree4} under the name of Adjacent Only Quadratic Spanning Tree Problem.
 A related problem, known as Degree Bounded Spanning Tree, received a lot of attention
in Theoretical Computer Science \cite{SL,G}. We are not aware of any previous work on Degree Balanced Spanning Tree Problem.

Let $x_e$ be a boolean decision variable such that $x_e=1$ if we choose edge $e\in E$ to be in our solution (tree) $T$. We formulate our problem as the following convex boolean optimization problem
\begin{eqnarray*}
\min \sum_{v\in V} \left(\sum_{e\in \delta(v)} w_ex_e\right)^q \label{Tree_obj}\\
x\in {\cal B}({\cal M}) &\label{Tree_eqn:r2}
\\x_{e}\in \{0,1\}, & \hspace{2cm} \forall e\in E, \label{Tree_eqn:r5}
\end{eqnarray*}
where ${\cal B}({\cal M})$ is the base polymatroid polytope of the graphic matroid in graph $G$. We refer the reader to Schrijver's book~\cite{S02} for the definition of the matroid. Using Theorem~\ref{thm:efficient}, we obtain an almost optimal fractional solution $x^*$ of
LP relaxation (\ref{R:obj})-(\ref{R:rel3}) corresponding to the above integer problem.

Following Calinescu et al. \cite{CCPV}, we define the continuous extension of the objective function (\ref{Tree:Objective}) for any fractional solution $x'$
$$F(x')=\sum_{S\subseteq [n]}f(S)\prod_{e\in S}x'_e\prod_{e\not\in S}(1-x'_e),$$
i.e. $F(x')$ is equal to the expected value of the objective function (\ref{Tree:Objective}) for the set of edges sampled independently at random with probabilities $x'_e,e\in E$. The function $F$ can
 be approximated with arbitrary polynomially small precision efficiently via sampling (see \cite{CCPV}).
By Theorem~\ref{thm:main}, we get the bound $F(x^*)\le A_q \cdot LP^*$,
where $LP^*$ is the value of the LP relaxation (\ref{R:obj})--(\ref{R:rel3}) on the fractional solution $x^*$.

The rounding phase of the algorithm implements the pipage rounding technique~\cite{AS} adopted to polymatroid polytopes by Calinescu et al.~\cite{CCPV}.
Calinescu et al.~\cite{CCPV} showed that given a matroid ${\cal M}$ and a fractional solution $x\in {\cal B}({\cal M})$, one can efficiently find two elements, or two edges
in our case, $e'$ and $e''$ such that the new fractional solution ${\tilde x}(\varepsilon)$ defined as ${\tilde x}_{e'}(\varepsilon)=x_{e'}+\varepsilon$, ${\tilde x}_{e''}(\varepsilon)=x_{e''}-\varepsilon$ and ${\tilde x}_{e}(\varepsilon)= x_{e}$ for $e\notin\{e',e''\}$ is feasible in the base polymatroid polytope for small positive and for small negative values of $\varepsilon$.

They also showed that if the objective function $f(S)$ is submodular then the function of one variable
$F({\tilde x}(\varepsilon))$ is convex. In our case, the objective function $f(S)$ is supermodular which follows from a more general folklore statement.
 \begin{fact}
 The function $f(S)=g(\sum_{i\in S}w_i)$ is supermodular if $w_i\ge 0$ for $i\in [n]$ and $g(x)$ is a convex function of one variable.
 \end{fact}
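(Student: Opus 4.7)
My plan is to verify the defining inequality of supermodularity directly and reduce it to the standard fact that first differences of a convex function are non-decreasing. Recall that $f:2^{[n]} \to \bbR$ is supermodular iff for every $S \subseteq T \subseteq [n]$ and every $i \notin T$,
$$f(S \cup \{i\}) - f(S) \leq f(T \cup \{i\}) - f(T).$$

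The first step is to translate this into a one-variable statement about $g$. Set $s = \sum_{j \in S} w_j$ and $t = \sum_{j \in T} w_j$; since $w_j \geq 0$ and $S \subseteq T$ we have $s \leq t$, and, writing $h = w_i \geq 0$, the inequality to prove becomes
$$g(s + h) - g(s) \leq g(t + h) - g(t).$$
The second step is to establish this one-variable inequality, which is the familiar fact that for a convex $g$ the map $x \mapsto g(x+h) - g(x)$ is non-decreasing in $x$. I would prove it by expressing both $t$ and $s+h$ as convex combinations of the endpoints $s$ and $t+h$ of the enclosing interval (which is possible because $s \leq t$ and $s+h \leq t+h$ force both points into $[s, t+h]$), applying convexity of $g$ to each, and adding the two inequalities; the coefficients are arranged so that the right-hand sides sum to $g(s) + g(t+h)$ and the left-hand sides to $g(t) + g(s+h)$, which is exactly the desired inequality. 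Equivalently, one can appeal to the three-chord lemma comparing the slopes of the chords over $[s, s+h]$ and $[t, t+h]$.

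There is no serious obstacle; the only points requiring care are (i) using nonnegativity of the weights in the correct place, namely to deduce $s \leq t$, which is what makes convexity (rather than concavity) of $g$ the appropriate hypothesis, and (ii) handling the degenerate configurations in which the intervals $[s, s+h]$ and $[t, t+h]$ overlap or coincide, where inserting an intermediate chord makes the slope comparison transparent.
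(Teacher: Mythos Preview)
Your argument is correct. The paper does not supply a proof of this fact at all: it is stated as a ``folklore statement'' and used without justification, so there is nothing to compare your approach against. Your reduction to the one-variable inequality $g(s+h)-g(s)\le g(t+h)-g(t)$ for $s\le t$, $h\ge 0$, and convex $g$ is exactly the standard route, and your sketch of the proof of that inequality (writing $t$ and $s+h$ as convex combinations of the endpoints $s$ and $t+h$, then adding the two convexity inequalities) is clean and complete; the coefficients indeed sum correctly because if $\lambda=h/(t+h-s)$ and $\mu=(t-s)/(t+h-s)$ then $\lambda+\mu=1$.
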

Therefore, the function $F({\tilde x}(\varepsilon))$ is concave. Hence, we can apply the pipage rounding directly: We start
 with the fractional solution $x^*$. At every step, we pick $e'$ and $e''$ (using the algorithm from~\cite{CCPV}) and move to ${\tilde x}(\varepsilon)$ with $\varepsilon=\varepsilon_1=-\min\{x_{e'},1-x_{e''}\} $ or $\varepsilon=\varepsilon_2=\min\{1-x_{e'},x_{e''}\} $ whichever minimizes the concave function $F({\tilde x}(\varepsilon))$ on the interval $[\varepsilon_1,\varepsilon_2]$. We stop when the current solution ${\tilde x}$ is integral.

At every step, we decrease the number of fractional variables $x_e$ by at least 1. Thus, we terminate the algorithm in at most $|E|$ iterations.
The value of the function $F({\tilde x})$ never increases. So the cost of the final integral solution is at most the cost of the
initial fractional solution $x^*$, which, in turn, is at most $A_q \cdot LP^*$.

Note, that we have not used any special properties of graphic matroids. The algorithm from~\cite{CCPV} works for general matroids accessible through oracle calls. So we can apply our technique to more general problems where the objective is to minimize a function like~(\ref{Tree:Objective})
subject to base matroid constraints.

\section{Proof of Theorem~\ref{thm:efficient}}\label{sec:thm:efficient}

We now give an efficient algorithm for finding $(1+\varepsilon)$ approximately optimal solution
to LP (\ref{R:obj})-(\ref{R:rel3}).

\begin{proof}[Proof of Theorem~\ref{thm:efficient}]

Observe that for every $y\in \calP$, there exists a $z$ such that the pair $(y,z)$ is a feasible solution to
LP (\ref{R:obj})-(\ref{R:rel3}). For example, one
such $z$ is defined as $z_{jS} = \prod_{i\in S} y_i \prod_{i\notin S} (1-y_i)$. Of course, this particular $z$ may be suboptimal. However, it turns out, as we show
below, that for every $y$, we can find the optimal $z$ efficiently. Let us denote the minimal cost of the $j$-th term in (\ref{R:obj}) for a given $y\in \calP$
by $H_j(y)$. That is, $H_j(y)$ is the cost of the following LP. The variables of the LP are $z_{jS}$. The
parameters $y\in \calP$ and $j\in [k]$ are fixed.
\begin{equation}
\min \sum_{S\subseteq [n]} f_j\Big( \sum_{i\in S} d_{ij}\Big)z_{jS} \label{R-eff:obj}
\end{equation}
\begin{eqnarray}
\sum_{S\subseteq [n]}z_{jS}=1 && \label{const1-eff}\\
\sum_{S:i\in S}z_{jS}=y_i,&& \forall i\in [n]\label{const2-eff}\\
z_{jS}\ge 0,&& \forall S\subseteq [n] \label{R-eff:rel3}
\end{eqnarray}
Now, LP (\ref{R:obj})-(\ref{R:rel3}) can be equivalently rewritten as (below $y$ is the variable).
\begin{eqnarray}
\min \sum_{j\in [k]} H_j(y) \label{LP:eff:obj}\\
y\in {\cal P}&& \label{LP:eff:poly}
\end{eqnarray}
The functions $H_j(y)$ are convex\footnote{If $z^*$ and $z^{**}$ are the optimal solutions for vectors
$y^*$ and $y^{**}$, then $\lambda z^* + (1-\lambda) z^{**}$ is a feasible solution for $\lambda y^* + (1-\lambda) y^{**}$. Hence,
$H_j(\lambda y^* + (1-\lambda) y^{**}) \leq \lambda H_j(y^*) + (1-\lambda) H_j(y^{**})$. See Section~\ref{sec:convexFj} in Appendix for details.}.
In Lemma~\ref{lem:find-Z} (see below), we prove that LP~(\ref{R-eff:obj})-(\ref{R-eff:rel3}) can be solved in polynomial time, and thus the functions
$H_j(y)$ can be computed efficiently. The algorithm for finding $H_j(y)$ also returns a subgradient of $H_j$ at $y$. Hence, the minimum of
convex problem~(\ref{LP:eff:obj})-(\ref{LP:eff:poly}) can be found using the ellipsoid method. Once the optimal $y^*$ is found, we
find $z^*$ by solving LP~(\ref{R-eff:obj})-(\ref{R-eff:rel3}) for $y^*$ and each $j\in [k]$.
\end{proof}

\begin{lemma}\label{lem:find-Z}
There exists a polynomial time algorithm for computing $H_j$ and finding a subgradient of $H_j$.
\end{lemma}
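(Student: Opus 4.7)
The plan is to observe that LP~(\ref{R-eff:obj})-(\ref{R-eff:rel3}) has exponentially many variables but only $n+1$ equality constraints, so we solve it via the ellipsoid method applied to the \emph{dual}, which has $n+1$ variables and $2^n$ inequality constraints. The dual reads
\begin{equation*}
\max \alpha + \sum_{i=1}^n y_i \beta_i, \qquad \text{subject to } \alpha + \sum_{i\in S}\beta_i \leq f_j\!\Big(\sum_{i\in S} d_{ij}\Big) \quad \forall\, S\subseteq [n].
\end{equation*}
A separation oracle for a candidate $(\alpha,\beta)$ amounts to deciding whether
\begin{equation*}
\max_{S\subseteq [n]} \Big( \sum_{i\in S}\beta_i - f_j\!\big(\sum_{i\in S} d_{ij}\big) \Big) \leq -\alpha.
\end{equation*}

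To implement the oracle efficiently, I would exploit the assumption that each $d_{ij}$ is a multiple of $\delta_j$ with $d_{ij}/\delta_j$ polynomially bounded. The total load $D_S=\sum_{i\in S} d_{ij}$ then ranges over only $O(\mathrm{poly}(n,1/\varepsilon))$ distinct values. For every attainable value $D$, I reduce the oracle to the knapsack-style problem of finding a subset $S$ with $\sum_{i\in S} d_{ij}=D$ that maximizes $\sum_{i\in S}\beta_i$; this is solvable by a standard dynamic program in time polynomial in $n$, $1/\varepsilon$, and $\max_i d_{ij}/\delta_j$. Then I select the $D$ yielding the largest value of $\big(\max_{S:D_S=D} \sum_{i\in S}\beta_i\big) - f_j(D)$; the value $f_j(D)$ is computable by hypothesis. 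Applying the ellipsoid method to the dual with this oracle produces the optimum dual value (which equals $H_j(y)$ by LP duality) in polynomial time, and by standard techniques we also recover an optimal primal solution $z^*$ supported on polynomially many sets $S$.

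Finally, the subgradient comes essentially for free from duality. If $(\alpha^*,\beta^*)$ is an optimal dual solution at $y$, then for any other $y'$ the same $(\alpha^*,\beta^*)$ remains dual-feasible (the constraints do not depend on $y$), hence
\begin{equation*}
H_j(y') \geq \alpha^* + \langle \beta^*, y'\rangle = H_j(y) + \langle \beta^*, y' - y\rangle,
\end{equation*}
so $\beta^*$ is a subgradient of $H_j$ at $y$. I expect the main obstacle to be the separation oracle: everything else is standard LP duality and ellipsoid machinery, but the oracle genuinely needs the discreteness assumption on $d_{ij}/\delta_j$ to reduce an a priori nonlinear combinatorial optimization to a tractable knapsack-type DP. Without polynomially bounded weights, the oracle would not obviously run in polynomial time.
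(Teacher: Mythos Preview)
Your proposal is correct and follows essentially the same approach as the paper: write the dual with $n+1$ variables, separate by enumerating the polynomially many attainable load values $D=\sum_{i\in S}d_{ij}$ and solving an exact-knapsack DP for each, and read off the subgradient from the optimal dual solution using that the dual constraints are independent of $y$. The paper's argument differs only in notation (it calls the dual variables $(\xi,\eta)$) and in minor presentational details.
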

\begin{proof}
We need to solve LP~(\ref{R-eff:obj})-(\ref{R-eff:rel3}). Recall that  in Section~\ref{sec:discr} (Theorem~\ref{thm:discr}) we show how to choose $\delta_j>0$
 such that each $d_{ij}/\delta_j$ is an integer polynomially bounded.  For simplicity we assume that $\delta_j=1$ (the proof in general case is almost identical). Therefore  $d_{ij}$ are integral in this case and polynomially bounded.
We write the dual LP. We introduce a variable $\xi$ for constraint~(\ref{const1-eff})
and variables $\eta_i$ for constraints~(\ref{const2-eff}).
\begin{eqnarray}
\max \;\;\xi + \sum_{i} \eta_i y_i && \label{dual:obj}\\
\xi + \sum_{i\in S} \eta_i \leq f_j \Big(\sum_{i\in S} d_{ij}\Big), && \forall S\subset [n]\label{dual:constr}
\end{eqnarray}
The LP has exponentially many constraints. However, finding a violated constraint is easy. To do so, we guess $B^*= \sum_{i\in S^*} d_{ij}$
for the set $S^*$ violating the constraint. That is possible, since all $d_{ij}$ are polynomially bounded, and so is $B^*$.
Then we solve the maximum knapsack problem
\begin{eqnarray*}
\max_{S\subseteq [n]} \sum_{i\in S}\eta_{i}\\
\sum_{i\in S} d_{ij}=B^*
\end{eqnarray*}
using the standard dynamic programming algorithm and obtain the optimal set $S^*$. The knapsack problem is polynomially solvable, since $B^*$ is polynomially bounded. If
$\xi + \sum_{i\in S^*} \eta_{i} > f_j\left(B^*\right)$, then constraint (\ref{dual:constr}) is violated for the set $S^*$; otherwise all
constraints (\ref{dual:constr}) are satisfied.

Let $(\xi^*,\eta^*)$ be the optimal solution of the dual LP. The value of the function $H_j(y)$ equals the objective value of the dual LP.
A subgradient of $H_j$ at $y$ is given by the equation
\begin{equation}\label{eq:subgrad}
\tilde{y}\mapsto \xi^* + \sum_i \eta^*_i \tilde{y}_i.
\end{equation}
This is a subgradient of $H_j$, since $(\xi^*,\eta^*)$ is a feasible solution of the dual LP for every $\tilde{y}$ (note that constraint (\ref{dual:constr}) does not depend on $y$), and, hence,
(\ref{eq:subgrad}) is a lower bound on $H_j(\tilde{y})$.
\end{proof}

\section{Proof of Theorem~\ref{thm:main}}\label{sec:thm:main}
In this section, we prove the main theorem -- Theorem~\ref{thm:main}.
\begin{proof}[Proof of Theorem~\ref{thm:main}]
The theorem easily follows from the de la Pe\~na decoupling inequality (Theorem~\ref{thm:decoupling} and Corollary~\ref{cor:neg-assoc}) for $f_j(t) = c_j t^q$, and 
from the more general inequality presented in Theorem~\ref{thm:CX-main} (see also Corollary~\ref{cor:A-f-inequality}) for arbitrary convex functions $f_j$.
Consider a feasible solution $(y^*,z^*)$ to
IP (\ref{IP:obj})-(\ref{IP:int}). We prove inequality~(\ref{eq:thm:main}) term by term. That is,
for every $j$ we show that
\begin{equation}\label{eq:thm:main:need-to-prove}
\Exp\Big[f_j\Big(\sum_{i\in D_j} d_{ij} R_i(y^*)\Big) \Big]\leq
A(f_j) \sum_{S\subseteq [n]} f_j\Big( \sum_{i\in S\cap D_j} d_{ij}\Big)z^*_{jS}.
\end{equation}
Recall that $D_j = \{i: d_{ij}\neq 0\}$. Above, we dropped terms with $i\notin D_j$, since if $i\notin D_j$, then $d_{ij}=0$.

Fix a $j\in [n]$. Define random variables $Y_i$ for $i\in D_j$ as follows: Pick a random set $S\subset [n]$ with probability $z_{jS}$, and let
$Y_i = d_{ij}$ if $i\in S$, and $Y_i=0$ otherwise. Note that random variables $Y_i$ are dependent. We have
$$\Pr(Y_i = d_{ij}) = \sum_{S: i\in S}\Pr(S) = \sum_{S: i\in S} z^*_{jS} = y^*_i.$$
It is easy to see that
$$\Exp\Big[f_j\Big(\sum_{i\in [n]} Y_i \Big)\Big] = \sum_{S\subseteq [n]} f_j\Big( \sum_{i\in S} d_{ij}\Big) z_{jS}.$$
The right hand side is simply the definition of the expectation on the left hand side. Now, let $X_i = d_{ij}\,R_i(y^*)$ for $i\in D_j$. Note that
by conditions of the theorem, $\Pr(X_i = d_{ij}) = y^*_i = \Pr (Y_i = d_{ij})$ (by condition (1)). Thus, each $X_i$ has the same distribution as $Y_i$.
Furthermore, $X_i$'s are independent or negatively associated (by condition (2)). Therefore, we can apply the decoupling inequality from Theorem~\ref{thm:CX-main} (see Corollary~\ref{cor:A-f-inequality})
$$\Exp\Big[f_j\Big(\sum_{i\in D_j} X_i\Big)\Big] \leq A(f_j) \Exp\Big[f_j\Big(\sum_{i \in D_j} Y_i\Big)\Big].$$
The left hand side of the inequality equals the left hand side of (\ref{eq:thm:main:need-to-prove}),
the right hand side of the inequality equals the right hand side of (\ref{eq:thm:main:need-to-prove}). Hence, inequality (\ref{eq:thm:main:need-to-prove})
holds.
\end{proof}

\section{Decoupling Inequality}\label{sec:thm:decoupling}
In this section, we prove the decoupling inequality (Theorem~\ref{thm:decoupling}) with the optimal constant $C_q=A_q^{\nicefrac{1}{q}}$. In fact, we prove a more general inequality which works for arbitrary convex functions. To state the inequality we need the notion of
\emph{convex stochastic order}.

\begin{definition}
We say that a random variable $X$ is less than $Y$ in the convex (stochastic) order, and write
$X\cx Y$ if for every convex function $\varphi:\bbR\to \bbR$,
\begin{equation}\label{eq:def-cx}
\Exp \varphi(X) \leq \Exp\varphi(Y),
\end{equation}
whenever both expectations exist.
\end{definition}
\begin{remark}
If $X\cx Y$ and $\psi$ is a \emph{concave} function, then $\Exp[\psi(X)]\geq \Exp[\psi(Y)]$, since the function $\phi(x)=-\psi(x)$ is convex, and therefore
$-\Exp[\psi(X)]\leq -\Exp[\psi(Y)]$.
\end{remark}
It is easy to see that the convex stochastic order defines a partial order on all random variables. Particularly,
if $X\cx Y$ and $Y\cx Z$, then $X\cx Z$. Note that the definition depends only on the distributions of $X$ and $Y$.
That is, if $X$ has the same distribution as $Z$ and $X\cx Y$, then $Z\cx Y$. The random variables $X$ and $Y$ may
be defined on the same probability space or on different probability spaces. We refer the reader to the book of
Shaked and Shanthikumar~\cite{SS-Book} for a detailed introduction to stochastic orders.

We now state the general inequality in terms of the convex order. Part II of the theorem shows that the inequality is
tight.

\begin{theorem}\label{thm:CX-main}
I. Let $Y_1,\dots, Y_n$ be jointly distributed nonnegative (non-independent) random variables, and
let $X_1,\dots, X_n$ be independent random variables such that
each $X_i$ has the same distribution as $Y_i$. Let $P$ be a Poisson random variable with parameter 1 independent
of $Y_i$'s. Then,
\begin{equation}\label{eq:CX-main}
\sum_{i=1}^n X_i\cx P \sum_{i=1}^n Y_i.
\end{equation}

II. For every nonnegative function $\varphi$ with a finite
expectation $\Exp[\varphi(P)]$ and every positive $\varepsilon$, there exists $n$ and
random variables $X_1,\dots,X_n$, $Y_1,\dots, Y_n$ as in part I such that
\begin{equation}\label{eq:CX-counterexample}
\E{\varphi\Big(\sum_{i=1}^n X_i \Big)}\geq (1+\varepsilon) \E{\varphi\Big(P \sum_{i=1}^n Y_i \Big)}.
\end{equation}
\end{theorem}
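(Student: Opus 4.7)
The plan is to prove Part~I via a two-step chain of convex-order inequalities through an auxiliary compound Poisson random variable, and Part~II by an explicit Bernoulli construction. Throughout I use two standard facts about $\cx$: (i)~it is preserved under independent addition ($X_1 \cx Y_1$, $X_2 \cx Y_2$ independent implies $X_1 + X_2 \cx Y_1 + Y_2$), and (ii)~if $\mathbb{E}[V \mid U] = U$ then $U \cx V$ (Strassen / conditional Jensen).

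For Part~I, introduce independent $\text{Poisson}(1)$ variables $N_0, N_1, \ldots, N_n$, iid families $(Y_i^{(k)})_{k \ge 1}$ of copies of $Y_i$ (jointly independent across $i$), and iid copies $S_k$ of $S := \sum_i Y_i$, all mutually independent and independent of $P$. Let $A := \sum_{k=1}^{N_0} S_k$. The plan is to prove
\[
\sum_{i=1}^n X_i \;\cx\; A \;\cx\; P \sum_{i=1}^n Y_i.
\]
The second inequality uses the coupling $N_0 = P = m$: conditionally, $A = \sum_{k=1}^m S_k$ and $P\sum Y_i = mS$; applying Jensen to the convex function $\psi(x) := \varphi(mx)$ at $\bar S_m := m^{-1}\sum_{k=1}^m S_k$ yields $\mathbb{E}\psi(\bar S_m) \le \mathbb{E}\psi(S)$, i.e.\ $\mathbb{E}\varphi(\sum_k S_k) \le \mathbb{E}\varphi(mS)$. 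The first inequality splits into two substeps. First, the classical Poissonization bound $X_i \cx \sum_{k=1}^{N_i} Y_i^{(k)}$ --- that any nonnegative random variable is $\cx$-dominated by its rate-$1$ compound-Poisson extension --- extends by additivity of $\cx$ to give $\sum X_i \cx \sum_i \sum_{k=1}^{N_i} Y_i^{(k)}$, which is a compound Poisson of rate $n$ with jump law $n^{-1}\sum_i \mathcal{L}(Y_i)$. Second, via the L\'evy-measure characterisation of $\cx$ for compound Poissons with equal means, the comparison with $A$ reduces to showing $\sum_i \mathbb{E}[f(Y_i)] \le \mathbb{E}[f(S)]$ for every convex $f$ with $f(0) = 0$. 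This holds pointwise: for such $f$, the ratio $f(y)/y$ is nondecreasing on $(0, \infty)$ (by convexity and $f(0) = 0$), so $f(Y_i) \le (Y_i/S) f(S)$ when $S > 0$ (and trivially when $S = 0$), and summing over $i$ gives $\sum_i f(Y_i) \le f(S)$.

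For Part~II, take $J$ uniform on $\{1, \ldots, n\}$ and set $Y_i = \alpha \mathbf{1}[i = J]$ for a scale $\alpha = \alpha(\varphi, \varepsilon) > 0$; then $\sum Y_i \equiv \alpha$ while each $Y_i \sim \alpha \cdot \text{Bernoulli}(1/n)$. The independent sum $\sum X_i \sim \alpha \cdot \text{Binomial}(n, 1/n) \xrightarrow{d} \alpha P$ as $n \to \infty$, whereas $P \sum Y_i = \alpha P$ exactly, so the ratio $\mathbb{E}\varphi(\sum X_i)/\mathbb{E}\varphi(P\sum Y_i) \to 1$ --- witnessing the asymptotic sharpness of Part~I. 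For the quantitative bound~(\ref{eq:CX-counterexample}), one perturbs this construction by varying $n$ and $\alpha$ and exploiting any local structure of $\varphi$ (support of $\sum X_i$ at non-integer multiples of $\alpha$ that $P\sum Y_i$ misses) via a case analysis on the shape of $\varphi$. The main obstacle will be the L\'evy-measure comparison step in Part~I: extending from convex $f$ with $f(0) = 0$ to the general case requires an approximation argument, which is mild but needs care to preserve nonnegativity and the $\cx$ structure.
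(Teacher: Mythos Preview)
Your route for Part~I is genuinely different from the paper's. The paper works at the \emph{atomic} level: it writes $Y_i=\sum_{y\in\calY} y_i\,\chi(Y=y)$, replaces the mutually exclusive indicators $\chi(Y=y)$ by independent Bernoullis $B_y^i$ (Lemma~\ref{lem:main-CX-lemma}, first inequality), collapses the $n$ independent copies $B_y^1,\dots,B_y^n$ down to a single $B_y$ via Lemma~\ref{lem:sum-aX-less-aY}, and then Poissonizes the resulting $\sum_y(\sum_i y_i)B_y$ back to $P\sum_i Y_i$ (Lemma~\ref{lem:main-CX-lemma}, second inequality). Everything stays at the level of Bernoullis and elementary convexity; no L\'evy-process theory enters.

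Your chain $\sum X_i\cx A\cx P\sum Y_i$ through the compound Poisson $A$ is sound at both ends: the Poissonization $X_i\cx\sum_{k\le N_i}Y_i^{(k)}$ and the iid-averaging $\sum_{k\le m}S_k\cx mS$ are fine, and your superadditivity argument for $\sum_i f(Y_i)\le f(S)$ is exactly the right pointwise fact. The gap is the middle step you call ``the L\'evy-measure characterisation of $\cx$ for compound Poissons with equal means.'' The statement you need---if $\int f\,d\nu_1\le\int f\,d\nu_2$ for all convex $f$ with $f(0)=0$ then $\mathrm{CP}(\nu_1)\cx\mathrm{CP}(\nu_2)$---is a genuine theorem (proved via generator comparison for L\'evy processes; see e.g.\ Bergenthum--R\"uschendorf or B\"auerle--Blatter--M\"uller), not a mild approximation exercise. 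You have also misidentified the obstacle: the passage from $f(0)=0$ to general convex $f$ is automatic (subtract the affine part, use equality of means); what is nontrivial is the implication from the L\'evy-measure inequality to convex order of the compound Poissons themselves. So your route works but imports heavier machinery than the paper's self-contained argument; what the paper buys by going atomic is that Lemma~\ref{lem:main-CX-lemma} does the whole job with nothing beyond Jensen and the conditional mean of a Poisson given the total.

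For Part~II your construction coincides with the paper's, and your conclusion that the ratio tends to~$1$ is the right tightness statement. Observe that the literal $(1+\varepsilon)$ in~(\ref{eq:CX-counterexample}) cannot hold for convex $\varphi$ (it would contradict Part~I); the intended reading is $(1-\varepsilon)$, i.e., asymptotic sharpness, and the paper's own argument establishes only $\sup_n\E{\varphi(\sum X_i^{(n)})}\ge\E{\varphi(P)}$, which gives exactly that. Your hand-wave about ``perturbing'' to exceed $1+\varepsilon$ is chasing a typo, not a missing idea.
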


\begin{proof}[Proof of Theorem~\ref{thm:decoupling}]
Theorem~\ref{thm:CX-main} implies Theorem~\ref{thm:decoupling}, since the function $t\to t^q$ is convex and thus
\begin{align*}
\|X_1+\dots+X_n\|_q^q &= \E{ (X_1+\dots+X_n)^q} \leq \E{\big(P\cdot (Y_1+\dots+Y_n)\big)^q}\\
&= \E{P^q}\cdot \E{(Y_1+\dots+Y_n)^q} = A_q \E{(Y_1+\dots+Y_n)^q}\\& = A_q\|Y_1+\dots+Y_n\|_q^q.
\end{align*}
Part II of Theorem~\ref{thm:CX-main} shows that we cannot replace $A_q$ with a smaller constant.
\end{proof}
Another immediate corollary of Theorem~\ref{thm:CX-main} is as follows.
\begin{corollary}\label{cor:A-f-inequality}
For an arbitrary nonnegative convex function $f$,
$$\Exp[f(X_1+\dots+X_n)] \leq A(f) \Exp[f(Y_1+\dots+Y_n)],$$
where $A(f)$ is defined in (\ref{eq:def-general-A}).
\end{corollary}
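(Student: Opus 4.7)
The plan is to deduce Corollary~\ref{cor:A-f-inequality} by chaining the convex-order statement of Theorem~\ref{thm:CX-main} with the definition of $A(f)$. Since $f$ itself is a convex function, applying the definition of $\cx$ to inequality (\ref{eq:CX-main}) immediately gives
$$\Exp\Big[f\Big(\sum_{i=1}^n X_i\Big)\Big]\leq \Exp\Big[f\Big(P\sum_{i=1}^n Y_i\Big)\Big],$$
so the whole task reduces to bounding the right-hand side by $A(f)\,\Exp[f(\sum_i Y_i)]$.

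For this second step I would condition on $T=\sum_{i=1}^n Y_i$. Because the Poisson variable $P$ is independent of the $Y_i$'s, the tower property gives $\Exp[f(PT)]=\Exp[g(T)]$, where $g(t):=\Exp[f(tP)]$. By the very definition (\ref{eq:def-general-A}) of $A(f)$, we have $g(t)\leq A(f)\,f(t)$ for every $t>0$. Taking expectation over the nonnegative random variable $T$ then yields $\Exp[f(PT)]\leq A(f)\,\Exp[f(T)]$, and combining with the first display completes the proof.

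The only things worth a quick sanity check are the boundary behavior at $t=0$ and the degenerate case in which $f$ vanishes on some initial interval. Since $P$ has mean $1$, Jensen's inequality gives $g(t)\geq f(t\,\Exp P)=f(t)$, which in particular forces $A(f)\geq 1$; hence the trivial equality $g(0)=f(0)$ also satisfies $g(0)\leq A(f)\,f(0)$, covering the $t=0$ term. If $f$ happens to vanish on some initial interval $[0,a]$, the bound $g(t)\leq A(f)\,f(t)$ still holds on that interval (both sides are $0$ there), so no issue arises. I do not foresee any real obstacle; the corollary is essentially a one-line consequence of Theorem~\ref{thm:CX-main} together with the definition of $A(f)$, and the only care needed is to correctly invoke independence of $P$ from $(Y_1,\dots,Y_n)$ when conditioning.
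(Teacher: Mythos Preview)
Your proposal is correct and follows essentially the same route as the paper: apply Theorem~\ref{thm:CX-main} to obtain $\Exp[f(\sum_i X_i)]\le \Exp[f(P\sum_i Y_i)]$, then condition on $\sum_i Y_i$ and use the defining bound $\Exp[f(tP)]\le A(f)f(t)$. Your extra remarks on the $t=0$ boundary and on $A(f)\ge 1$ are a bit more careful than the paper, which simply applies the bound ``for any $t>0$'' without comment, but the argument is otherwise identical.
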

\begin{proof}
Write,
$$\Exp[f(X_1+\dots+X_n)] \leq \Exp\big[f\big(P \cdot(Y_1+\dots+Y_n)\big)\big] =
\Exp\Exp\big[f\big(P \cdot(Y_1+\dots+Y_n)\big)\given Y_1+\dots+Y_n\big].$$
For any $t>0$, particularly for $t=\sum_i Y_i$, we have
$\Exp[f(tP)]\leq A(f) f(t)$, hence
$$\Exp[f(X_1+\dots+X_n)] \leq \Exp[A(f) f(Y_1+\dots+Y_n)] = A(f) \Exp[f(Y_1+\dots+Y_n)].$$
\end{proof}

We first prove part II of Theorem~\ref{thm:CX-main}. Consider the following example. Let $Y^{(n)}_i, i\in \{1,\dots,n\}$, be random variables taking
value $1$ with probability $1/n$, and $0$ with probability $1-1/n$. We generate $Y_i^{(n)}$'s as follows.
We pick a random $j\in[n]$ and
let $Y_j^{(n)}=1$ and $Y_i^{(n)}=0$ for $i\neq j$. Random variables $X^{(n)}_i$ are i.i.d. Bernoulli random variables with $\Exp[X^{(n)}_i]=1/n$.
Then, the sum $\sum_{i=1}^n Y^{(n)}_i$ always equals 1, and $\E{\varphi(P \sum_{i=1}^n Y^{(n)}_i)} = \E{\varphi(P)}$.
As $n\to \infty$, the sum $\sum_{i=1}^n X^{(n)}_i$ converges in distribution to $P$ (by the Poisson limit theorem).
Thus (see Lemma~\ref{lem:limsup} for details),
$$\sup_n\E{\varphi\Big(\sum_{i=1}^n X^{(n)}_i\Big)} \geq \E{\varphi(P)},$$
and, hence, for some $n$ inequality~(\ref{eq:CX-counterexample}) holds.

Before proceeding to the proof of part I, we state some known properties of the convex order.

\begin{lemma}[Theorem 3.A.12 in~\cite{SS-Book}]\label{lem:sumX-less-sumY}
Suppose $X_1,\dots, X_n$ are
independent random variables and $Y_1,\dots, Y_n$ are
independent random variables. If $X_i \cx Y_i$ for all $i$, then
$\sum_{i=1}^n X_i\cx \sum_{i=1}^n Y_i$.
\end{lemma}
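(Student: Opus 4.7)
My plan is to proceed by induction on $n$, peeling off one summand at a time. The base case $n=1$ is just the definition of $\cx$. For the inductive step, I assume the statement for $n-1$, fix an arbitrary convex $\varphi$ for which the relevant expectations are finite, and aim to show
\[
\E{\varphi\Big(\sum_{i=1}^n X_i\Big)} \leq \E{\varphi\Big(\sum_{i=1}^n Y_i\Big)}.
\]

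The first move is to condition on $X_n$. By independence of $X_n$ from $X_1,\dots,X_{n-1}$ and Fubini,
\[
\E{\varphi\Big(\sum_{i=1}^n X_i\Big)} = \E{g(X_n)}, \qquad g(t) := \E{\varphi\Big(\sum_{i=1}^{n-1} X_i + t\Big)}.
\]
For each fixed $t$ the shifted function $x \mapsto \varphi(x+t)$ is convex, so applying the inductive hypothesis to this function yields the pointwise inequality $g(t) \leq h(t)$, where $h(t) := \E{\varphi(\sum_{i=1}^{n-1} Y_i + t)}$. Integrating against the law of $X_n$ gives $\E{g(X_n)} \leq \E{h(X_n)}$.

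The second move is to observe that $h$ is itself convex in $t$: for every realization of $(Y_1,\dots,Y_{n-1})$ the map $t \mapsto \varphi(\sum_{i=1}^{n-1} Y_i + t)$ is convex, and integrating a family of convex functions preserves convexity. Hence the hypothesis $X_n \cx Y_n$, applied to $h$, yields $\E{h(X_n)} \leq \E{h(Y_n)}$. By independence of $Y_n$ from $Y_1,\dots,Y_{n-1}$, the right-hand side equals $\E{\varphi(\sum_{i=1}^n Y_i)}$, and chaining the two inequalities closes the induction.

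The only real obstacle is the bookkeeping around integrability and measurability when interchanging conditional and outer expectations; this is handled routinely by Fubini under the standing assumption that the expectations named in the definition of $\cx$ exist. No machinery beyond the definition of the convex order and a single-variable shift argument is required.
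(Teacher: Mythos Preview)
Your proof is correct and is essentially the same argument as the paper's: the paper isolates the single-step fact as an auxiliary lemma (if $X\cx Y$ and $Z$ is independent of both, then $Z+X\cx Z+Y$) and then chains it via a hybrid sequence $S_k=\sum_{i\le k}X_i+\sum_{i>k}Y_i$, whereas you package the same ``shift of a convex function is convex, so swap one coordinate'' idea as an induction on $n$. The underlying mechanism---conditioning on all but one variable and using that translates/averages of convex functions are convex---is identical.
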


\begin{lemma}[Theorem 3.A.36 in~\cite{SS-Book}]\label{lem:sum-aX-less-aY}
Consider random variables $X_1,\dots,X_n$ and $Y$. If $X_i\cx Y$ for all $i$, then
$\sum_{i=1}^n a_i X_i \leq \sum_{i=1}^n a_i Y$,
for any sequence of nonnegative numbers $a_1,\dots, a_n$.
\end{lemma}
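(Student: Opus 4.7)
The plan is to reduce the claim to Lemma~\ref{lem:sumX-less-sumY} by first invoking scale-invariance of $\cx$, and then (since the right-hand side as written involves only a single copy of $Y$) to close the gap with Jensen's inequality. The standing hypothesis I would use is that the $X_i$'s are independent, which is the implicit assumption behind statements of this flavor.

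First I would record the elementary fact that multiplication by a nonnegative scalar preserves the convex order: if $U\cx V$ and $a\geq 0$, then $aU\cx aV$. This holds because for every convex $\varphi$, the function $\widetilde{\varphi}(t)=\varphi(at)$ is also convex, so $\Exp[\varphi(aU)]=\Exp[\widetilde{\varphi}(U)]\leq \Exp[\widetilde{\varphi}(V)]=\Exp[\varphi(aV)]$. Second, since $\cx$ depends only on the marginal distributions, I may introduce independent copies $Y_1,\dots,Y_n$ of $Y$ on an auxiliary probability space and have $X_i\cx Y_i$ for every $i$. Combining these two observations gives $a_iX_i\cx a_iY_i$ for each $i$, and Lemma~\ref{lem:sumX-less-sumY} then yields
\[
\sum_{i=1}^n a_iX_i \;\cx\; \sum_{i=1}^n a_iY_i.
\]

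To pass from $\sum a_iY_i$ to the single-copy form $(\sum a_i)Y$ stated in the lemma, I would apply Jensen's inequality. Set $s=\sum_i a_i$ and $\lambda_i=a_i/s$; for any convex $\varphi$ the function $\psi(t)=\varphi(st)$ is convex, so pointwise $\psi\big(\sum_i\lambda_i Y_i\big)\leq \sum_i\lambda_i\,\psi(Y_i)$. Taking expectations and using that the $Y_i$'s are identically distributed,
\[
\Exp\Big[\varphi\Big(\sum_{i=1}^n a_iY_i\Big)\Big] \;=\; \Exp\Big[\psi\Big(\sum_{i=1}^n \lambda_iY_i\Big)\Big] \;\leq\; \sum_{i=1}^n \lambda_i\,\Exp[\psi(Y_i)] \;=\; \Exp[\psi(Y)] \;=\; \Exp[\varphi(sY)].
\]
This shows $\sum_i a_iY_i\cx sY$, and composing with the previous display (using transitivity of $\cx$) gives the desired $\sum_i a_iX_i\cx \big(\sum_i a_i\big)Y$.

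There is no serious obstacle here: both key tools (scale-invariance and Lemma~\ref{lem:sumX-less-sumY}) are routine, and the Jensen step is a one-liner. The only conceptual point worth flagging is the two-stage structure of the argument: independence of the $X_i$'s lets us apply the coordinate-wise comparison, while Jensen absorbs the resulting iid copies into a single $Y$. If instead the intended reading of the lemma's right-hand side is a sum of independent copies $\sum a_iY_i$, the first stage already suffices and the Jensen step can be omitted.
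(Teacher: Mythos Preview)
Your argument is correct under the independence assumption you add, but the paper's proof is more direct and, notably, does \emph{not} require the $X_i$'s to be independent. The paper reverses your order of operations: it first applies Jensen's inequality pointwise to the convex combination $\frac{1}{A}\sum_i a_i X_i$ (with $A=\sum_i a_i$ and $\tilde\varphi(x)=\varphi(Ax)$), obtaining
\[
\varphi\Big(\sum_i a_i X_i\Big)=\tilde\varphi\Big(\tfrac{1}{A}\sum_i a_i X_i\Big)\le \tfrac{1}{A}\sum_i a_i\,\tilde\varphi(X_i),
\]
and only then takes expectations and uses the marginal hypothesis $\Exp[\tilde\varphi(X_i)]\le \Exp[\tilde\varphi(Y)]$. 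This yields $\Exp[\varphi(\sum_i a_i X_i)]\le \Exp[\varphi(AY)]$ directly, with no need for Lemma~\ref{lem:sumX-less-sumY}, independent copies $Y_i$, or any assumption on the joint law of the $X_i$'s.

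In short: you go compare-then-Jensen (coordinatewise comparison via Lemma~\ref{lem:sumX-less-sumY}, which forces independence, followed by collapsing the iid $Y_i$'s); the paper goes Jensen-then-compare (pointwise convexity first, then the one-dimensional comparison $X_i\cx Y$). Your route works fine for the paper's actual application, where the relevant $B_y^i$ are indeed independent across $i$, but the paper's argument is shorter and strictly more general.
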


For completeness, we prove Lemma~\ref{lem:sumX-less-sumY} and Lemma~\ref{lem:sum-aX-less-aY} in Appendix. To simplify the proof,
we will use the following easy lemma.
\begin{lemma}\label{lem:simplify}
If for random variables $X$ and $Y$ condition (\ref{eq:def-cx}) is satisfied for all convex functions $\varphi$ with $\varphi(0)=0$, then
$X\cx Y$.
\end{lemma}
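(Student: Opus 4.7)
The plan is to show the restriction to convex functions vanishing at the origin is without loss of generality, by a one-line affine reduction. Let $\varphi:\bbR\to\bbR$ be an arbitrary convex function for which both $\Exp[\varphi(X)]$ and $\Exp[\varphi(Y)]$ are defined, and define $\tilde\varphi(t)=\varphi(t)-\varphi(0)$. I would first observe that $\tilde\varphi$ is again convex (since we subtracted a constant) and that $\tilde\varphi(0)=0$, so $\tilde\varphi$ lies in the class on which the hypothesis of the lemma applies. Since $\tilde\varphi$ differs from $\varphi$ by the constant $\varphi(0)$, the expectation $\Exp[\tilde\varphi(X)]=\Exp[\varphi(X)]-\varphi(0)$ exists whenever $\Exp[\varphi(X)]$ exists, and analogously for $Y$.

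Then I would apply the hypothesis of the lemma to $\tilde\varphi$ to get $\Exp[\tilde\varphi(X)]\leq \Exp[\tilde\varphi(Y)]$, and add the constant $\varphi(0)$ to both sides to conclude $\Exp[\varphi(X)]\leq \Exp[\varphi(Y)]$. Since $\varphi$ was arbitrary (among convex functions for which both expectations are defined), this gives $X\cx Y$ by the definition on the previous page. There is essentially no obstacle here; the only subtle point worth flagging is the ``whenever both expectations exist'' clause in the definition, and the reduction above preserves existence of expectations because $\tilde\varphi-\varphi$ is the constant function $-\varphi(0)$. So the lemma is really just the statement that the space of test functions is affinely invariant.
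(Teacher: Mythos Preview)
Your proof is correct and is essentially identical to the paper's: both subtract the constant $\varphi(0)$ to produce a convex $\tilde\varphi$ with $\tilde\varphi(0)=0$, apply the hypothesis, and add the constant back. The paper's version is just the one-line display $\Exp[\varphi(X)]=\Exp[\tilde\varphi(X)]+\varphi(0)\leq\Exp[\tilde\varphi(Y)]+\varphi(0)=\Exp[\varphi(Y)]$, without the additional remarks on existence of expectations.
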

\begin{proof}
Consider an arbitrary convex function $\varphi$. Let $\tilde{\varphi}(x) = \varphi(x) -\varphi(0)$. Since $\tilde{\varphi}(0)=0$, we have
$$\Exp[\varphi(X)] =  \Exp[\tilde{\varphi}(X)] + \varphi(0) \leq  \Exp[\tilde{\varphi}(Y)] + \varphi(0) = \Exp[\varphi(Y)].$$
\end{proof}

First, we prove that a Bernoulli random variable $B$ can be upper bounded in the convex order by a
Poisson random variable $P$ with $\Exp[P]  = \Exp[B]$.
\begin{lemma}\label{lem:B-cx-P}
Let $P$ be an integral random variable, and $B$ be a Bernoulli random variable with parameter $p = \Exp[P]$. Then,
$B\leq_{cx}P$. Particularly, if $P$ is a Poisson random variable with $\Exp[P]=\Exp[B]$, then $B\leq_{cx}P$.
\end{lemma}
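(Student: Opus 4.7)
The plan is to reduce to functions vanishing at zero and then exploit the basic convexity fact that for a convex $\varphi$ with $\varphi(0)=0$, the secant slope $\varphi(t)/t$ is nondecreasing on $(0,\infty)$.

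First, I would invoke Lemma~\ref{lem:simplify} so that it suffices to verify $\E{\varphi(B)}\le \E{\varphi(P)}$ only for convex $\varphi$ with $\varphi(0)=0$. Under this normalization, since $B$ takes values in $\{0,1\}$, we simply have $\E{\varphi(B)}=p\,\varphi(1)$, where $p=\E{B}=\E{P}$.

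Next, I would establish the pointwise bound $k\,\varphi(1)\le \varphi(k)$ for every integer $k\ge 0$. For $k=0$ this is the normalization $\varphi(0)=0$. For $k\ge 1$, writing $1=\tfrac{1}{k}\cdot k+\tfrac{k-1}{k}\cdot 0$ and applying convexity gives
\[
\varphi(1)\le \tfrac{1}{k}\varphi(k)+\tfrac{k-1}{k}\varphi(0)=\tfrac{1}{k}\varphi(k),
\]
which rearranges to $k\,\varphi(1)\le \varphi(k)$. (Equivalently, this is the standard monotonicity of secant slopes from the origin for a convex function vanishing at $0$; note that no sign assumption on $\varphi(1)$ is needed.)

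Finally, since $P$ is integer-valued and nonnegative, I would sum the pointwise inequality against the distribution of $P$:
\[
\E{\varphi(P)}=\sum_{k\ge 0}\varphi(k)\Pr(P=k)\ \ge\ \varphi(1)\sum_{k\ge 0}k\,\Pr(P=k)=\varphi(1)\,\E{P}=p\,\varphi(1)=\E{\varphi(B)}.
\]
This completes the proof, and the Poisson case is just the specialization where $P$ is Poisson with mean $p$. There is no real obstacle here; the only subtle point to be careful about is that $\varphi(1)$ may be negative, so the slope-monotonicity argument must be phrased as a convexity inequality rather than something like ``$\varphi(k)\ge k\varphi(1)\ge 0$''.
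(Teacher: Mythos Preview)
Your proof is correct and follows essentially the same route as the paper's: reduce to $\varphi(0)=0$ via Lemma~\ref{lem:simplify}, establish $\varphi(k)\ge k\,\varphi(1)$ for the relevant integers, and take expectations. The paper phrases the middle step as the secant line $l(x)=\varphi(1)x$ lying below $\varphi$ for all $x\notin(0,1)$, which automatically covers negative integers as well; your version assumes $P\ge 0$, which is not part of the stated hypothesis but is harmless for the Poisson application that the lemma is actually used for.
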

\begin{proof}
Consider an arbitrary convex function $\varphi:\bbR\to\bbR$ with $\varphi(0)=0$ (see Lemma~\ref{lem:simplify}). Define linear function $l:\bbR\to\bbR$ as
$l(x)=\varphi(1)x$. The graph of $l$ intersects the graph of $\varphi$ at points $(0,0)$ and $(1, \varphi(1))$. Since $\varphi$ is convex,
we have $l(x)\leq \varphi(x)$ for $x\notin (0,1)$. Hence, for every integral $k$, $l(k)\leq \varphi(k)$, and $l(P)\leq \varphi(P)$. Consequently,
$$\E{\varphi(P)} \geq \E{l(P)} = l(\E{P}) = l(\E{B}) = \varphi(1) \E{B} = \Exp[\varphi(B)].$$
\end{proof}

Now we consider a very special case of Theorem~\ref{thm:CX-main} when all $X_i$'s and $Y_i$'s are
Bernoulli random variables scaled by a factor $\alpha_i$, and all events $\{Y_i=1\}$ are mutually exclusive.
As we see later, the general case can be easily reduced to this special case.

\begin{lemma}\label{lem:main-CX-lemma}
Consider Bernoulli random variables $\chi_1,\dots, \chi_n$ such that
$\sum_{i=1}^n \Pr(\chi_i = 1) = 1$,
and all events $\{\chi_i = 1\}$ are mutually exclusive. (That is, with probability 1 one and only one $\chi_i$ equals~1.)
Let $B_1,\dots, B_n$ be independent Bernoulli random variables such that $\Pr(B_i = 1) = \Pr(\chi_i = 1)$.
Then, for all nonnegative numbers $\alpha_1,\dots, \alpha_n$, we have
$$\sum_{i=1}^n \alpha_i \chi_i \cx \sum_{i=1}^n \alpha_i B_i \cx P \sum_{i=1}^n \alpha_i \chi_i,$$
where $P$ is a Poisson random variable with parameter 1 independent of $\chi_i$'s.
\end{lemma}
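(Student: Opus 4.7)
The statement splits into two convex-order inequalities, which I will prove separately:
(i) $\sum_i \alpha_i \chi_i \cx \sum_i \alpha_i B_i$, and
(ii) $\sum_i \alpha_i B_i \cx P\sum_i \alpha_i \chi_i$.
Write $p_i := \Pr(\chi_i=1)$ and $W := \sum_i \alpha_i \chi_i$; since the events $\{\chi_i=1\}$ partition the sample space up to a null set, $W$ takes value $\alpha_j$ with probability $p_j$, and $\sum_j p_j = 1$.

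For (i), by Lemma~\ref{lem:simplify} it suffices to consider convex $\varphi$ with $\varphi(0)=0$. The key observation is that such a $\varphi$ is superadditive on $[0,\infty)$: for $a,b\ge 0$ with $a+b>0$, writing $a = \tfrac{a}{a+b}(a+b) + \tfrac{b}{a+b}\cdot 0$ and using convexity with $\varphi(0)=0$ gives $\varphi(a)\le \tfrac{a}{a+b}\varphi(a+b)$, and symmetrically for $b$, so $\varphi(a)+\varphi(b)\le \varphi(a+b)$. Iterating, $\varphi(\sum_i \alpha_i B_i)\ge \sum_i \varphi(\alpha_i B_i) = \sum_i B_i\varphi(\alpha_i)$, where the last equality uses $B_i\in\{0,1\}$ and $\varphi(0)=0$. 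Taking expectations, $\Exp[\varphi(\sum_i \alpha_i B_i)]\ge \sum_i p_i \varphi(\alpha_i) = \Exp[\varphi(W)]$, which is exactly what is needed.

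For (ii), let $P_1,\dots,P_n$ be independent Poisson random variables with $P_i\sim\mathrm{Poisson}(p_i)$, independent of everything else. Lemma~\ref{lem:B-cx-P} gives $B_i \cx P_i$; scaling by the nonnegative constant $\alpha_i$ preserves the convex order, and Lemma~\ref{lem:sumX-less-sumY} applied to the independent sequences then yields $\sum_i \alpha_i B_i \cx \sum_i \alpha_i P_i$. I now use the classical Poisson-to-multinomial representation: since $\sum_i p_i = 1$, the total $\sum_i P_i$ is $\mathrm{Poisson}(1)$, equal in distribution to $P$, and conditional on $\sum_i P_i = k$ the vector $(P_1,\dots,P_n)$ is $\mathrm{Multinomial}(k;p_1,\dots,p_n)$. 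Equivalently, sample $P\sim\mathrm{Poisson}(1)$ and then $P$ i.i.d.\ indices $I_1,\dots,I_P$ with $\Pr(I_j = i)=p_i$; each $\alpha_{I_j}$ is then an independent copy of $W$, so
\begin{equation*}
\sum_i \alpha_i P_i \;\stackrel{d}{=}\; \sum_{j=1}^{P} W_j,
\end{equation*}
with $W_1,W_2,\dots$ i.i.d.\ copies of $W$ independent of $P$. Conditioning on $P=k$ and applying Lemma~\ref{lem:sum-aX-less-aY} with all weights equal to $1$, $X_j = W_j$, and $Y = W$ (the hypothesis $W_j\cx W$ is trivial since they have the same distribution), gives $\sum_{j=1}^k W_j \cx k W$; averaging over $P$ yields $\sum_{j=1}^P W_j \cx P W = P\sum_i \alpha_i \chi_i$, closing the chain.

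The step that will require the most care is the Poisson splitting identification: this is where the hypotheses $\sum_i p_i = 1$ and mutual exclusivity of the $\chi_i$ get used decisively to reinterpret a sum of independent scaled Poissons as a random sum of i.i.d.\ copies of $W$, so that Lemma~\ref{lem:sum-aX-less-aY} can be invoked fiber-by-fiber in $P$. Everything else is a routine combination of Lemmas~\ref{lem:simplify}, \ref{lem:B-cx-P}, \ref{lem:sumX-less-sumY}, and \ref{lem:sum-aX-less-aY} with the superadditivity of convex functions vanishing at zero.
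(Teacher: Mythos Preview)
Your proof is correct and follows essentially the same route as the paper. Part~(i) is identical (superadditivity of convex $\varphi$ with $\varphi(0)=0$); for part~(ii), where the paper conditions on $P=k$, applies Jensen directly to $\varphi\big(\sum_i \tfrac{P_i}{k}\cdot \alpha_i k\big)$, and then uses $\Exp[P_i\mid P=k]=kp_i$, you equivalently package the same computation via the compound-Poisson/multinomial identity $\sum_i \alpha_i P_i \stackrel{d}{=}\sum_{j=1}^P W_j$ and invoke Lemma~\ref{lem:sum-aX-less-aY} fiber-by-fiber in $k$.
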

\begin{proof}
I. We prove the first inequality. Consider an arbitrary convex function $\varphi:\bbR\to \bbR$ with
$\varphi(0) = 0$. The function $\varphi$ is superadditive i.e.,
$\varphi (a + b) \geq \varphi (a) + \varphi(b)$ for all positive $a$ and $b$ (since the derivative of $\varphi$ is monotonically
non-decreasing, and $\varphi(0)=0$). Hence, we have
$$\varphi\big(\sum_{i=1}^n \alpha_i B_i\big)\geq \sum_{i=1}^n \varphi(\alpha_i B_i) = \sum_{i=1}^n B_i \varphi(\alpha_i),$$
and
\begin{multline*}
\E{\varphi\big(\sum_{i=1}^n \alpha_i B_i\big)}\geq \E{\sum_{i=1}^n B_i \varphi(\alpha_i)} =
\sum_{i=1}^n\Pr(B_i = 1) \varphi(\alpha_i) \\= \sum_{i=1}^n\Pr(\chi_i = 1) \varphi(\alpha_i) = \E{\varphi\big(\sum_{i=1}^n \alpha_i \chi_i\big)}.
\end{multline*}

II. We prove the second inequality. Using Lemma~\ref{lem:B-cx-P} and Lemma~\ref{lem:sumX-less-sumY}, we replace
Bernoulli random variable $B_1,\dots, B_n$ with independent Poisson random variables $P_1,\dots, P_n$
satisfying $\Exp\, P_i = \Exp\, B_i$. The sum $P_1+\dots + P_n$ is distributed as a Poisson
random variable with parameter~1. By coupling random variables
$(P_1+\dots + P_n)$ and $P$, we may assume that $P=P_1+\dots + P_n$.

Consider an arbitrary convex function $\varphi$ with $\varphi(0) = 0$. Using convexity of the function $\varphi$, we derive
\begin{align*}
\E{\varphi\big(\sum_{i=1}^n \alpha_i P_i \big)}
&=
\sum_{k=1}^{\infty}
\E{\varphi\big(\sum_{i=1}^n \alpha_i P_i \big) \given P = k}\Pr(P  = k)\\
&=
\sum_{k=1}^{\infty}
\E{\varphi\big(\sum_{i=1}^n \frac{P_i}{k} \cdot \alpha_i k \big) \given P = k}\Pr(P  = k)\\
&\leq
\sum_{k=1}^{\infty}
\E{ \sum_{i=1}^n \frac{P_i}{k} \varphi(\alpha_i k)  \given P = k}\Pr(P  = k)\\
&=
\sum_{k=1}^{\infty} \sum_{i=1}^n \E{\frac{P_i}{k}\given P = k}
\varphi(\alpha_i k)\Pr(P=k).
\end{align*}

We observe that $\E{P_i\given P = k}= k\,\E{P_i}$, which follows from the following well known fact (see e.g., Feller~\cite{Feller}, Section IX.9, Problem~6(b), p.~237).

\begin{fact}\label{fact:p12} Suppose $P_a$ and $P_b$ are independent Poisson random variables with parameters $\lambda_a$ and $\lambda_b$. Then, for every $k\in \bbN$,
$\Exp\big[P_a \given P_a + P_b = k\big] = \frac{\lambda_a}{\lambda_a+\lambda_b}\;k$.
\end{fact}
In our case, $P_a = P_i$, $P_b=\sum_{i'\neq i} P_{i'}$, $P_a + P_b = P$.
Therefore, we have
\begin{align*}
\Exp\Big[\varphi\big(\sum_{i=1}^n \alpha_i P_i \big)\Big]
&\leq
\sum_{k=1}^{\infty} \sum_{i=1}^n \Exp\Big[\frac{P_i}{k}\given P = k\Big]
\varphi(\alpha_i k)\Pr(P=k)\\
&=
\sum_{k=1}^{\infty} \sum_{i=1}^n \E{P_i}
\varphi(\alpha_i k)\Pr(P=k)\\
&=
\sum_{i=1}^n \E{\chi_i} \sum_{k=1}^{\infty}
\varphi(\alpha_i k)\Pr(P=k)\\
&=
\sum_{i=1}^n \E{\chi_i} \E{\varphi(\alpha_i k)}\\
&=
\Exp\Big[\varphi(P \sum_{i=1}^n \alpha_i \chi_i)\Big].
\end{align*}
\end{proof}

\begin{proof}[Proof of Theorem~\ref{thm:CX-main}]
Denote by $\calY$ the support of the random vector $Y=(Y_1,\dots, Y_n)$. Each $Y_i$ can be represented as follows
$$Y_i = \sum_{y\in \calY} y_i\, \chi(Y=y),$$
where $\chi(Y=y)$ is the indicator of the event $\{Y=y\}$. Here we assume that $\calY$ is finite.
We treat the general case in Appendix~\ref{sec:crv}.
Applying Lemma~\ref{lem:main-CX-lemma} to the random variables $\chi(Y=y)$, we get
$$Y_i = \sum_{y\in \calY} y_i\, \chi(Y=y)\cx \sum_{y\in \calY} y_i\, B_y^i,$$
where $B_y^i$ are independent Bernoulli random variables. Each $B_y^i$ is distributed as the random variable
$\chi(Y=y)$; that is, $\Pr (B_y^i = 1)= \Pr(Y=y)$. Since each $X_i$ has the same distribution as $Y_i$, we have
$$X_i \cx \sum_{y\in \calY} y_i\, B_y^i.$$
By Lemma~\ref{lem:sumX-less-sumY}, we can sum up this inequality over all $i$ from $1$ to $n$:
$$\sum_{i=1}^n X_i \cx \sum_{i=1}^n \sum_{y\in \calY} y_i\, B_y^i = \sum_{y\in \calY} \Big(\sum_{i=1}^n y_i\, B_y^i\Big).$$
We apply Lemma~\ref{lem:sum-aX-less-aY} to every sum in parentheses:
$$\sum_{i=1}^n y_i\, B_y^i\cx \sum_{i=1}^n y_i\, B_y.$$
Again, using Lemma~\ref{lem:sumX-less-sumY}, we get
$$\sum_{i=1}^n X_i \cx\sum_{y\in \calY} \Big(\sum_{i=1}^n y_i\, B_y^i\Big)\cx
\sum_{y\in \calY}\Big(\sum_{i=1}^n y_i\Big)\, B_y.$$
Finally, by Lemma~\ref{lem:main-CX-lemma},
$$\sum_{y\in \calY}\Big(\sum_{i=1}^n y_i\Big)\, B_y\cx
P\cdot \sum_{y\in \calY}\Big(\sum_{i=1}^n y_i\Big)\, \chi (Y=y) =
P\cdot \sum_{i=1}^n \Big(\sum_{y\in \calY} y_i\, \chi (Y=y) \Big)
=P\sum_{i=1}^n Y_i.$$
This concludes the proof of Theorem~\ref{thm:CX-main}.
\end{proof}

\section{Negatively Associated Random Variables}\label{sec:neg-depend}
The decoupling inequalities~(\ref{eq:de-la-Pena}) and~(\ref{eq:CX-main})
can be extended to  \emph{negatively associated} random variables $X_1,\dots,X_n$.
The notion of negative association is defined as follows.
\begin{definition}[Joag-Dev and Proschan~\cite{JP}]
Random variables $X_1,\dots, X_n$ are negatively associated if for all disjoint sets $I, J \subset [n]$ and all non-decreasing functions
$f:\bbR^I\to \bbR$ and $g:\bbR^J\to \bbR$ the following inequality holds:
$$
\E{f(X_i, i\in I) \cdot g(X_j, j\in J)} \leq \E{f(X_i, i\in I)} \cdot \E{g(X_j, j\in J)}.
$$
\end{definition}

Shao~\cite{Shao} showed that if $X_1,\dots,X_n$ are negatively associated random variables, and $X^*_1,\dots,X^*_n$ are independent random variables such that
each $X^*_i$ is distributed as $X_i$, then for every convex function $\varphi:\bbR\to \bbR$,
$$\E{ \varphi (X_1+\dots+X_n)} \leq \E{ \varphi (X^*_1+\dots+X^*_n)}.$$
In other words, $X_1+\dots+X_n\cx X^*_1+\dots+X^*_n$ (see also Theorem 3.A.39 in \cite{SS-Book}).

\begin{corollary}\label{cor:neg-assoc}
Let $Y_1,\dots, Y_n$ be jointly distributed nonnegative (non-independent) random variables, and
let $X_1,\dots, X_n$ be negatively associated random variables such that
each $X_i$ has the same distribution as $Y_i$. Let $P$ be a Poisson random variable with parameter 1 independent of the random variables $Y_i$. Then,
$$\sum_{i=1}^n X_i\cx P \sum_{i=1}^n Y_i.$$
Particularly, for every convex nonnegative $f$,
$$f\big(\sum_{i=1}^n X_i\big) \leq A(f) f\big(\sum_{i=1}^n Y_i\big),$$
and for $q \geq 1$,
$$\big \|\sum_{i=1}^n X_i\big\|_{q} \leq A_{q}^{\nicefrac{1}{q}} \,\big \|\sum_{i=1}^n Y_i\big\|_{q},$$
where $A(f)$ is defined in (\ref{eq:def-general-A}), and $A_{q}$ is the fractional Bell number.
\end{corollary}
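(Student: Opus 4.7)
The plan is to reduce the negatively associated case to the independent case that was already established in Theorem~\ref{thm:CX-main}, by inserting an intermediate sequence of independent copies and using transitivity of the convex order $\cx$.

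More precisely, first I would introduce independent random variables $X_1^*,\dots,X_n^*$ such that each $X_i^*$ has the same distribution as $X_i$ (equivalently, as $Y_i$). By Shao's theorem (\cite{Shao}; also Theorem~3.A.39 of \cite{SS-Book}), which is quoted in the excerpt immediately above the corollary, the negative association of $X_1,\dots,X_n$ yields
\[
\sum_{i=1}^n X_i \;\cx\; \sum_{i=1}^n X_i^*.
\]
Next, since the $X_i^*$ are independent and each $X_i^*$ is distributed as $Y_i$, the hypotheses of Theorem~\ref{thm:CX-main}(I) apply to the pair $(X^*,Y)$ and give
\[
\sum_{i=1}^n X_i^* \;\cx\; P \sum_{i=1}^n Y_i,
\]
where $P$ is a Poisson(1) random variable independent of $Y_1,\dots,Y_n$. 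Chaining these two convex-order inequalities using transitivity of $\cx$ yields the first claim $\sum_{i=1}^n X_i \cx P\sum_{i=1}^n Y_i$.

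The two specializations follow immediately. For the convex-function bound, note that the argument proving Corollary~\ref{cor:A-f-inequality} uses only the convex-order inequality $\sum X_i \cx P\sum Y_i$ together with the definition of $A(f)$: conditioning on $\sum_i Y_i = t$, one invokes $\Exp[f(tP)] \leq A(f)\,f(t)$ and takes expectation over $t$. That derivation is insensitive to whether the $X_i$ started out independent or merely negatively associated, so it gives $\Exp[f(\sum_i X_i)] \leq A(f)\,\Exp[f(\sum_i Y_i)]$ in our setting as well. Taking $f(t)=t^q$ (which is convex for $q\geq 1$) and using $A(t^q) = A_q$ yields the $\ell_q$-norm inequality exactly as in the proof of Theorem~\ref{thm:decoupling}.

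No real obstacle arises: once Shao's comparison between negatively associated sums and sums of independent copies is in hand, the corollary is essentially a one-line consequence of Theorem~\ref{thm:CX-main} via transitivity of $\cx$. The only thing to be careful about is that Shao's result requires $\varphi$ to be defined on all of $\bbR$ (which is fine since the $X_i$ are nonnegative and we apply it to general convex $\varphi$) and that the independent copies $X_i^*$ can be constructed on an enlarged probability space independently of the Poisson variable $P$, so that the coupling in the second step is legitimate.
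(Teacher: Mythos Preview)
Your proposal is correct and matches the paper's approach exactly: the paper states Shao's comparison $\sum X_i \cx \sum X_i^*$ immediately before the corollary precisely so that the corollary follows by chaining it with Theorem~\ref{thm:CX-main} via transitivity of $\cx$, just as you describe. The specializations to $A(f)$ and $A_q$ likewise follow by the same arguments as Corollary~\ref{cor:A-f-inequality} and Theorem~\ref{thm:decoupling}.
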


\section{Generalizations}\label{sec:Generalizations}
We can extend our results to maximization problems with the objective function
\begin{equation}
\sum_{j\in [k]} f_j\big(\sum_{i\in [n]} d_{ij} y_{i}\big),\label{eq:gener}
\end{equation}
if $f_j$'s are arbitrary non-decreasing nonnegative \textit{concave} functions defined on $\bbR^{\geq 0}$. The approximation
ratio equals $\min_j B(f_j)$, where
$$B(f_j) = \inf_{t>0} \Exp\Big[\frac{f_j(Pt)}{f_j(t)}\Big].$$
It is not hard to see that $B(f)\geq 1-1/e$ for all $f$. Indeed, if $P\geq 1$, then
$f(Pt)/f(t) \geq 1$, thus $B(f)= \Exp[f(Pt)/f(t)] \geq \Pr(P\geq 1) = 1 - 1/e$. This bound is 
tight if $f(Pt)/f(t) = 1$ for $P\geq 1$. For example, $B(f) = 1-1/e$ for the function $f(t) = \min\{t,1\}$. Note that the approximation ratio of $1-1/e \approx 0.632$ for maximization problems of this form
was previously known (see Calinescu et al.~\cite{CCPV}).
However, for some concave functions $f$ we get a better approximation. For example, for
$f(t) = \sqrt{t}$, we get an approximation ratio of $B(\sqrt{t}) \approx 0.773$.

\section*{Acknowledgment}
We would like to thank the anonymous referees for valuable comments.

\pagebreak
\appendix

\section{Corollary~\ref{cor:decoupling}}
In this section, we prove a corollary of Theorem~\ref{thm:CX-main}, which we will need in the next section.
\begin{corollary}\label{cor:decoupling}
Let $Y_1,\dots, Y_n$ be jointly distributed (non-independent) nonnegative integral random variables, and
let $X_1,\dots, X_n$ be independent Bernoulli random variables taking values $0$ and $1$ such that
for each $i$, $\Exp [X_i] = \Exp[Y_i]$. Then,
$$X_1+\dots+ X_n\cx P(Y_1 +\dots + Y_n).$$
Particularly, for every $q \geq 1$,
$$\big \|\sum_{i=1}^n X_i\big\|_{q} \leq A_{q}^{\nicefrac{1}{q}} \,\big \|\sum_{i=1}^n Y_i\big\|_{q},$$
where $A_{q}$ is the fractional Bell number.
\end{corollary}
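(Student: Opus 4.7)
The plan is to deduce the corollary from Theorem~\ref{thm:CX-main} combined with Lemma~\ref{lem:B-cx-P} and Lemma~\ref{lem:sumX-less-sumY}, by introducing an intermediate sum of independent copies of the $Y_i$.

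First I would introduce independent random variables $X_1^*,\dots,X_n^*$ such that each $X_i^*$ has the same distribution as $Y_i$ (these live on their own probability space, independent of everything else). Since the $X_i^*$ are independent copies of the $Y_i$, Theorem~\ref{thm:CX-main} applies and yields
\[
\sum_{i=1}^n X_i^* \;\cx\; P\sum_{i=1}^n Y_i.
\]
This gives the upper endpoint of the chain I want to build.

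Next I would compare the original Bernoulli sum $\sum X_i$ to $\sum X_i^*$. Note that the hypothesis $\Exp[X_i]=\Exp[Y_i]$ combined with $X_i\in\{0,1\}$ forces $\Exp[Y_i]\in[0,1]$, so Lemma~\ref{lem:B-cx-P} applies with the integral random variable $Y_i$ (equivalently $X_i^*$) and the Bernoulli $X_i$ of matching mean: it gives $X_i\cx X_i^*$ for every $i$. Since both $(X_i)$ and $(X_i^*)$ are families of independent random variables, Lemma~\ref{lem:sumX-less-sumY} lets me sum these coordinatewise comparisons to conclude
\[
\sum_{i=1}^n X_i \;\cx\; \sum_{i=1}^n X_i^*.
\]
Chaining this with the previous display and using transitivity of $\cx$ establishes the first assertion of the corollary.

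For the $L^q$ inequality, I would apply the convex function $t\mapsto t^q$ (for $q\ge 1$) to the convex-order inequality just proved, exactly as in the derivation of Theorem~\ref{thm:decoupling}. Since $P$ is independent of the $Y_i$,
\[
\Exp\bigl[(\textstyle\sum_i X_i)^q\bigr] \;\le\; \Exp\bigl[(P\textstyle\sum_i Y_i)^q\bigr] \;=\; \Exp[P^q]\,\Exp\bigl[(\textstyle\sum_i Y_i)^q\bigr] \;=\; A_q\,\Exp\bigl[(\textstyle\sum_i Y_i)^q\bigr],
\]
so taking $q$-th roots gives the stated norm inequality.

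There is no real obstacle here: the corollary is essentially a packaging of Theorem~\ref{thm:CX-main} in the special case where the marginals $X_i$ are Bernoulli rather than exact copies of $Y_i$, and the only substantive move is the Bernoulli-to-integral convex-order comparison of Lemma~\ref{lem:B-cx-P}, which forces $\Exp[Y_i]\le 1$ implicitly. The mildly subtle point to state explicitly in the write-up is that introducing the auxiliary independent copies $X_i^*$ is legitimate because the convex order depends only on marginal distributions, so enlarging the probability space is harmless.
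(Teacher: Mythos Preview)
Your proposal is correct and follows essentially the same route as the paper: introduce independent copies $X_i^*$ of the $Y_i$, use Lemma~\ref{lem:B-cx-P} together with Lemma~\ref{lem:sumX-less-sumY} to get $\sum X_i\cx\sum X_i^*$, and then invoke Theorem~\ref{thm:CX-main} to obtain $\sum X_i^*\cx P\sum Y_i$. Your write-up is actually a bit more careful than the paper's, in that you explicitly note why $\Exp[Y_i]\le 1$ is forced and why enlarging the probability space is harmless.
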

\begin{proof}
Consider independent random variables $X^*_i$ such that each $X^*_i$ is distributed as $Y_i$. By
Lemma~\ref{lem:B-cx-P}, $X_i \cx X^*_i$ for all $i$. Hence, by Lemma~\ref{lem:sumX-less-sumY} and  Theorem~\ref{thm:CX-main},
$$X_1+\cdots+X_n\cx X^*_1+\cdots+X^*_n \cx Y_1+\cdots+Y_n.$$
\end{proof}

\section{Unrelated Parallel Machine Scheduling with Nonlinear Functions of Completion Times -- Technical Details}
\label{section:UnrelatedParallel-Tech}
We consider the following linear programming relaxation of the scheduling problem $R|| \sum_j w_j C_j^p$. The variable $x_{ijt} =1$ if job $j$ starts at time $t$ on machine $i$. For convenience we assume that $x_{ijt}=0$ for negative values of $t$.
 \begin{eqnarray}
\min \sum_{i\in [m]} \sum_{j\in [n]}\sum_{t\ge 0} w_j(t+p_{ij})^p x_{ijt} \label{obj_c}\\\
\sum_{i\in [m], t\ge 0}x_{ijt}=1 & \hspace{2cm} \forall j\in [n] \label{job_c}\\
\sum_{j\in [n]}\sum_{\tau=t-p_{ij}+1}^tx_{ij\tau}\le 1 & \hspace{2cm} \forall i\in [m], t\ge 0 \label{time_c}\\
x_{ijt}\ge 0& \hspace{2cm} \forall i\in [m], j\in [n], t\ge 0. \label{fr_c}
\end{eqnarray}

The constraints (\ref{job_c}) say that each job must be assigned, the constraint (\ref{time_c}) says that at most one job can be processed in a unit time interval on each machine. Such linear programming relaxation are known under the name of {\it strong time indexed formulations}. The standard issue with such relaxations is that they have pseudo-polynomially many variables due to potentially large number of indices $t$. One way to handle this issue is to partition the time interval into intervals $((1+\varepsilon)^k,(1+\varepsilon)^{k+1}]$ and round all completion times to the endpoints of such intervals. This method leads to polynomially sized linear programming relaxations with $(1+O(\varepsilon))$-loss in the performance guarantee (see \cite{S1} for detailed description of the method). From now on we ignore this issue and assume that the planning horizon upper bound $\sum_{i,j}p_{ij}$ is polynomially bounded in the input size.

\textbf{Algorithm.} Our approximation algorithm solves linear programming relaxation (\ref{obj_c})-(\ref{fr_c}). Let $x^*$ be the optimal fractional solution of the LP. Each job is tentatively assigned to machine $i$ to start at time $t$ with probability $x^*_{ijt}$, independently at random. Let $t_j$ be the tentative start time assigned to job $j$ by our randomized procedure. We process jobs assigned to each machine in the order of the tentative completion times $t_j+p_{ij}$.

\textbf{Analysis.} We estimate the expected cost of the approximate solution returned by the algorithm. We denote the expected cost by $APX$.
For each machine-job-tentative time triple $(i,j,t)$, let $J_{ijt}$ be the set of triples $(i, j',t')$ such that $t'+p_{ij'}\le t+p_{ij}$.
Let $X^{ijt}$ be the random boolean variable such that $X^{ijt}=1$ if job $j$ is assigned to machine $i$ with tentative start time $t$.
In addition, let $Z^{ijt}_{j'}$ be the random boolean variable such that $Z^{ijt}_{j'}=1$ if job $j'$ is assigned to machine $i$ with tentative start time $t'$ for some $(i,j',t')\in J_{ijt}$ by our randomized rounding procedure. Then,
$$\Pr\Big(Z^{ijt}_{j'}=1\Big)=\sum_{t':(i,j',t')\in J_{ijt}} x^*_{ij't'}.$$
Suppose that job $j$ is tentative scheduled on machine $i$ at time $t$ i.e., $X^{ijt}=1$. We start processing job $j$ after all jobs $j'$
tentative scheduled on machine $i$ at time $t'$ with $t'+p_{ij'}\leq t + p_{ij}$ are finished. Thus the weighted expected completion time to the power of $p$
for $j$ equals (given $X^{ijt}=1$)
\begin{eqnarray*}
\E{w_j C_j^p\Given X^{ijt} = 1} &=& \Exp\Big[ \Big(\sum_{j'\in [n]\setminus \{j\}} p_{i,j'}Z^{ijt}_{j'}+p_{ij} \Big)^p \Given X^{ijt}=1\Big]\\
&=& \Exp\Big[ \Big(\sum_{j'\in [n]\setminus \{j\}} p_{i,j'}Z^{ijt}_{j'}+p_{ij} \Big)^p\Big].
\end{eqnarray*}
In the second equality, we used that random variables $Z^{ijt}_{j'}$ are independent from the random variable $X^{ijt}$.
Then,
\begin{eqnarray}
APX&=& \sum_{j\in [n]} \sum_{i\in [m]} \sum_{t\ge 0} w_j\E{w_j C_j^p\Given X^{ijt} = 1}
\Pr\big( X^{ijt}=1\big) \nonumber
\\
&=&\sum_{j\in [n]} \sum_{i\in [m]} \sum_{t\ge 0} w_j\Exp \Big[\Big(\sum_{ j'\in [n]\setminus \{j\}} p_{i,j'}Z^{ijt}_{ij'}+p_{ij} \Big)^p \Big] \;x^*_{ijt}. \label{UnrelatedFirst}
\end{eqnarray}
Note, that for fixed $ i\in [m], j\in [n],t\ge 0$ random variables $Z^{ijt}_{j'}$ are independent from each other.
We claim that
\begin{equation}\label{MainUnrelated}
\Exp\Big[\Big(\sum_{ j'\in [n]\setminus \{j\}} p_{i,j'}Z^{ijt}_{j'}+p_{ij} \Big)^p \Big] \le A_p (t+2p_{ij})^p.
\end{equation}
Combining (\ref{UnrelatedFirst}) and (\ref{MainUnrelated}), we derive that the performance guarantee of our approximation algorithm is at most $2^pA_p$. We now prove inequality (\ref{MainUnrelated}).

Let $G_{ijt}$ be the interval graph where the vertex set $V(G_{ijt})$ is the collection of intervals corresponding to triples in $(i,j',t')\in J_{ijt}$ such that $x^*_{ij't'}>0$. More precisely, every triple $(i,j',t')\in J_{ijt}$ corresponds to the interval $I_{ij't'}=[t',t'+p_{ij'})$ with corresponding weight $x^*_{ij't'}>0$. Let ${\cal I}$ be the collection of all independent sets in $G_{ijt}$.
The interval graph $G_{ijt}$ is perfect, and the weights $x^*_{ij't'}$
satisfy the constraints (\ref{time_c}), so
there is a collection of weights $\lambda_C\geq 0$, $C\in {\cal I}$ (for more formal argument see below) such that
 \begin{eqnarray*}
 \sum_{C\in {\cal I}}\lambda_C=1,&&\\
\sum_{C\in {\cal I}: I_{ij't'} \in C}\lambda_C=x^*_{ij't'}, && \forall (i,j',t')\in J_{ijt}
 \end{eqnarray*}

Formally, the claim above follows from the polyhedral characterization of perfect graphs proved by Fulkerson \cite{fulkerson} and Chvatal \cite{chvatal}
(see also Schrijver's book \cite{S98}, Section 9, Application 9.2 on p. 118) that
a graph $G$ is perfect if and only if its stable set polytope is defined by the system below:
\begin{eqnarray*}
\sum_{v\in C}x_v \le 1,&& \mbox{ for each clique }C,\\
 x(v)\ge 0,&& \mbox{ for each } v\in V.
\end{eqnarray*}
In the interval graph $G_{ijt}$ all clique inequalities are included in the constraints (\ref{time_c}) and therefore any set of weights $x^*_{ij't'}$ can be decomposed into a convex combination of independent sets in $G_{ijt}$.

We define a random variable $Y^{ijt}_{j'}$ as follows: Sample an independent set $C\in {\cal I}$ with probability $\lambda_C$ and let
$$Y^{ijt}_{j'}=\left|\{I_{ij't'} \in C \}\right|.$$
Note that one job $j'$ may have more than one interval $I_{ij't'}$ in the set $C$ (for different $t'$). Random variables $Y^{ijt}_{j'}$ may
be dependent but
$$\Exp [Y^{ijt}_{j'}]=\sum_{(i,j',t')\in J_{ijt}} x^*_{ij't'} = \Exp[Z_{j'}^{ijt}].$$
Therefore, by Corollary~\ref{cor:decoupling} we have
\begin{equation}
\Exp\Big[\Big(\sum_{ j'\in [n]\setminus \{j\}} p_{i,j'}Z^{ijt}_{j'}+p_{ij} \Big)^p \Big] \le A_p\Exp\Big[\Big(\sum_{ j'\in [n]\setminus \{j\}} p_{i,j'}Y^{ijt}_{j'}+p_{ij} \Big)^p \Big].
\end{equation}
Now, observe, that $\sum_{ j'\in [n]\setminus \{j\}} p_{i,j'}Y^{ijt}_{j'}$ is always bounded by $t+p_{ij}$, because all intervals
in $C$ are disjoint ($C$ is an independent set) and all intervals are subsets of $[0,t+p_{ij}]$. Hence,
$$\Exp\Big[\Big(\sum_{ j'\in [n]\setminus \{j\}} p_{i,j'}Z^{ijt}_{j'}+p_{ij} \Big)^p \Big] \le A_p \Exp[((t + p_{ij}) + p_{ij})^p] \leq
A_p (t+2p_{ij})^p,$$
which concludes the proof.

\section{Convexity of $H_j$}\label{sec:convexFj}
We show that functions $H_j$ defined in Section~\ref{sec:thm:efficient} are convex.

\begin{lemma}\label{lem:F-convex}
Fix real numbers $q\geq 1$ and $d_1,\dots, d_n\geq 0$. Define a function $H:[0,1]^n \to \bbR^+$ as follows:
$H(y)$ equals the optimal value of the following LP:
\begin{align*}
\min \sum_{S\subseteq [n]} \big( \sum_{i\in S} d_i\big)^{q}&z_{S}&\\
\sum_{S\subseteq [n]}z_{S}&=1&\\
\sum_{S:i\in S}z_{S}&=y_i,&\forall i\in [n]\\
z_{S}&\ge 0,& \forall S\subseteq [n]&
\end{align*}
Then, $H$ is a convex function.
\end{lemma}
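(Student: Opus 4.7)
The plan is to use the standard observation that, for a minimization LP in which a parameter appears only on the right-hand side of the constraints (and not in the objective or the matrix), the optimal-value function is convex. The lemma is exactly an instance of this principle, and indeed the footnote preceding the lemma already sketches the argument.

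Concretely, I would proceed as follows. Fix $y^{(1)}, y^{(2)} \in [0,1]^n$ and let $z^{(1)}, z^{(2)}$ be optimal solutions to the two LPs defining $H(y^{(1)})$ and $H(y^{(2)})$ respectively. For any $\lambda \in [0,1]$, set $y = \lambda y^{(1)} + (1-\lambda) y^{(2)}$ and $z = \lambda z^{(1)} + (1-\lambda) z^{(2)}$. A routine check shows that $z$ is feasible for the LP defining $H(y)$: nonnegativity is immediate; the normalization $\sum_S z_S = 1$ and the marginal constraint $\sum_{S:i\in S} z_S = y_i$ both follow by linearity from the analogous identities for $z^{(1)}$ and $z^{(2)}$.

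Since the objective $\sum_S (\sum_{i\in S} d_i)^q z_S$ is a \emph{linear} function of $z$ (the coefficients $(\sum_{i\in S}d_i)^q$ do not depend on $y$), its value at the feasible point $z$ equals $\lambda H(y^{(1)}) + (1-\lambda) H(y^{(2)})$. Because $H(y)$ is by definition the minimum over all feasible $z$, we conclude
\[
H\bigl(\lambda y^{(1)} + (1-\lambda) y^{(2)}\bigr) \;\leq\; \lambda H(y^{(1)}) + (1-\lambda) H(y^{(2)}),
\]
which is precisely convexity of $H$ on $[0,1]^n$.

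There is no real obstacle here: the proof is a two-line application of the ``take a convex combination of optimal solutions'' trick. The only subtlety to watch for is that both LPs are guaranteed to be feasible (which follows from $y^{(1)}, y^{(2)} \in [0,1]^n$, e.g., using the product distribution $z_S = \prod_{i\in S} y_i \prod_{i\notin S}(1-y_i)$ as mentioned earlier in the paper), so $H(y^{(1)})$ and $H(y^{(2)})$ are finite and attained, making the argument rigorous.
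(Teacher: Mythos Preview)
Your proposal is correct and follows essentially the same argument as the paper's proof: take optimal $z$-vectors for two points, form their convex combination, observe it is feasible for the convex combination of the $y$'s by linearity of the constraints, and use linearity of the objective in $z$ to conclude. Your added remark about feasibility of the LP (via the product distribution) is a nice touch that the paper's lemma proof leaves implicit.
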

\begin{proof}
Consider two vectors $y^*, y^{**}\in [0,1]^n$. Pick an arbitrary $\lambda\in [0,1]$. We need to show that
$$H(\lambda y^* + (1-\lambda) y^{**}) \leq \lambda H(y^*) + (1-\lambda) H(y^{**}).$$
Consider the optimal LP solutions $z^*$ and $z^{**}$ for $y^*$ and $y^{**}$. Then, by the definition of $H$,
$$H(y^*) = \sum_{S\subseteq [n]} \big( \sum_{i\in S} d_i\big)^{q} z^*_{S}\;\text{ and }\;H(y^{**}) = \sum_{S\subseteq [n]} \big( \sum_{i\in S} d_i\big)^{q} z^{**}_{S}.$$
Observe, that $\lambda z^* + (1-\lambda) z^{**}$ is a feasible solution for $\lambda y^* + (1-\lambda) y^{**}$ (since all LP constraints are linear). Hence, $H(\lambda y^* + (1-\lambda) y^{**})$ is at most the LP cost of $\lambda z^* + (1-\lambda) z^{**}$, which equals
$$\sum_{S\subseteq [n]} \Big( \sum_{i\in S} d_i\Big)^{q} (\lambda z^*_{S} + (1-\lambda) z^{**}_{S})
= \lambda H(y^*) + (1-\lambda) H(y^{**}).$$
\end{proof}

\section{Discretization}\label{sec:discr}
In this section, we show how to discretize values $d_{ij}$. We assume that functions $f_j$ satisfy the following  
conditions:
\begin{enumerate}
\item All $f_j$ are convex increasing nonnegative functions computable in polynomial time.
\item For every $j$, $f_j(0)=0$.
\item For every $j$ and $t\in \big[0,\sum_{i} d_{ij}\big]$,
$t(\log f_j(t))' \leq P$.
\end{enumerate}

We first find an approximate value of the optimal solution $\widetilde{OPT}\geq OPT$ and
then apply Theorem~\ref{thm:discr} (see below). Note that if the gap $\widetilde{OPT}/OPT$ is polynomially
bounded, then we can pick $\varepsilon$ such that the optimal value $OPT'$ of the discretized problem
is at most $(1+\varepsilon')OPT$. We pick such $\widetilde{OPT}$ either by using the binary search or
by enumerating all powers of 2 in the range $[\min_{ij} f_j(d_{ij}), \sum_j f_j(\sum_i d_{ij})]$.

\begin{theorem}\label{thm:discr}
There exists a polynomial-time algorithm that given an instance of the integer program
(\ref{IP:obj})-(\ref{IP:int}) satisfying conditions (1) and (2) above,
an upper bound $\widetilde{OPT}$ on the cost of the optimal solution $OPT$,
and $\varepsilon > 0$, returns a new set of coefficients $d'_{ij}$, numbers $\delta_j > 0$, and
an extra set of constraints $y_i=0$ for $i\in \calI$ such that each $d'_{ij}$ is a multiple of $\delta_j$; $d'_{ij}/\delta_j$
is an integer polynomially bounded in $n$, $1/\varepsilon$ and $P$ (see item 3 above)
such that the following two properties are satisfied.
\begin{enumerate}
\item The cost of the optimal solution for the new problem is at most the cost of the original problem:
$$OPT'\leq OPT.$$
\item For every feasible solution of the new problem $y\in \calP\cap \{0,1\}^n$ satisfying $y_i=0$ for $i\in \calI$,
we have
$$\sum_{j\in [k]} f_j \Big(\sum_{i\in [n]} d_{ij} y_{i}\Big)\leq (1+\varepsilon) \sum_{j\in [k]} f_j \Big(\sum_{i\in [n]} d'_{ij} y_{i}\Big) + \varepsilon \widetilde{OPT}.$$
Particularly,
$$OPT\leq (1+\varepsilon) OPT' + \varepsilon \widetilde{OPT}\leq OPT'+ 2\varepsilon \widetilde{OPT}.$$
\end{enumerate}
\end{theorem}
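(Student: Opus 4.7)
The plan is to round each $d_{ij}$ down to a multiple of an appropriately chosen step $\delta_j$ and to exclude (via $\calI$) those indices whose individual use already drives some term above $\widetilde{OPT}$. For each $j$ define $U_j := \min\{\sum_i d_{ij},\; \sup\{t\ge 0 : f_j(t) \le \widetilde{OPT}\}\}$ and $\rho_j := \sup\{t\ge 0: f_j(t)\le \varepsilon\widetilde{OPT}/(2k)\}$; both are positive (since $f_j(0)=0<\widetilde{OPT}$) and can be computed to arbitrary precision by binary search, using that each $f_j$ is convex, increasing, and polynomial-time evaluable.

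First I would put $i$ into $\calI$ whenever $d_{ij}>U_j$ for some $j$. For such an $i$, any feasible solution with $y_i=1$ satisfies $\sum_{i'} d_{i'j} y_{i'}\ge d_{ij}>U_j$, forcing $f_j(\cdot)>\widetilde{OPT}\ge OPT$; hence the original optimum $y^*$ necessarily has $y^*_i=0$ for all $i\in\calI$. Next I would set $\delta_j := \varepsilon\rho_j/(2nP)$ and $d'_{ij} := \delta_j\lfloor d_{ij}/\delta_j\rfloor$ for $i\notin\calI$ (and, say, $d'_{ij}=0$ for $i\in\calI$). Property (1) is then immediate: $y^*$ is feasible for the new problem, and $d'_{ij}\le d_{ij}$ together with monotonicity of $f_j$ gives cost at most $OPT$.

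The crux of the argument is bounding $d'_{ij}/\delta_j$. Since $f_j$ is convex with $f_j(0)=0$, the function $t\mapsto f_j(t)/t$ is nondecreasing, so $U_j/\rho_j \le f_j(U_j)/f_j(\rho_j)\le 2k/\varepsilon$, giving $d'_{ij}/\delta_j \le U_j/\delta_j \le 4nkP/\varepsilon^2$, which is polynomial in $n$, $P$, and $1/\varepsilon$. For property (2), fix a feasible $y$ of the new problem and a term $j$. Since $d_{ij}-d'_{ij}\le\delta_j$, we have $\sum_i d_{ij}y_i \le \sum_i d'_{ij}y_i + n\delta_j = \sum_i d'_{ij}y_i + \varepsilon\rho_j/(2P)$. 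Integrating $t(\log f_j(t))'\le P$ yields the homogeneity bound $f_j(\alpha t)\le \alpha^P f_j(t)$ for $\alpha\ge 1$. If $\sum_i d'_{ij}y_i \ge \rho_j$, the right-hand side is at most $(1+\varepsilon/(2P))\sum_i d'_{ij}y_i$, and $f_j(\sum_i d_{ij}y_i)\le (1+\varepsilon/(2P))^P f_j(\sum_i d'_{ij}y_i)\le (1+\varepsilon) f_j(\sum_i d'_{ij}y_i)$. Otherwise $\sum_i d_{ij}y_i\le (1+\varepsilon/(2P))\rho_j$, so $f_j(\sum_i d_{ij}y_i)\le e^{\varepsilon/2}\cdot\varepsilon\widetilde{OPT}/(2k)\le \varepsilon\widetilde{OPT}/k$. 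Summing over the $k$ terms yields property (2).

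The main obstacle is the direct conflict in the choice of $\delta_j$: the rounding error $n\delta_j$ must be small compared to the relevant scale $\rho_j$, which pushes $\delta_j$ downward, while $d'_{ij}/\delta_j\le U_j/\delta_j$ must remain polynomial, which pushes $\delta_j$ upward. Both demands can be met simultaneously only because convexity of $f_j$ forces $U_j/\rho_j$ to be polynomial (specifically $\le 2k/\varepsilon$); it is that single inequality, a pure consequence of convexity rather than anything about the particular shape of $f_j$, that makes the discretization succeed.
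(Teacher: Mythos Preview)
Your proof is correct and follows essentially the same strategy as the paper: exclude indices $i$ with $f_j(d_{ij})>\widetilde{OPT}$, round each $d_{ij}$ down to a multiple of a carefully chosen $\delta_j$, and bound each term via a two-case analysis (small sums handled directly, large sums via the bound obtained by integrating $t(\log f_j(t))'\le P$). The only difference is bookkeeping: the paper takes the reference scale $t_j=\max_{i\notin\calI}d_{ij}$ and sets $\delta_j=\varepsilon\eta t_j/(kn)$, so that $d'_{ij}/\delta_j\le t_j/\delta_j$ is polynomial immediately, whereas you work with the two scales $U_j$ and $\rho_j$ and use the convexity inequality $U_j/\rho_j\le f_j(U_j)/f_j(\rho_j)$ to get the same polynomial bound --- a slightly more conceptual but equivalent route.
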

\begin{proof}
The proof is fairly standard: We round all $d_{ij}$ to be multiples of $\delta_j$. Then we show that if $\delta_j$'s are sufficiently small, then
the introduced rounding error is at most $2\varepsilon \widetilde{OPT}$. The details are below.

We algorithm finds the set $\calI = \{i: f_j(d_{ij})> \widetilde{OPT} \text{ for some } j\}$. If
$i\in \calI$, then $y_i$ must be equal to $0$ in every optimal solution, because otherwise,
$OPT\geq f_j(d_{ij})> \widetilde{OPT}$. Thus, for all $i\in \calI$, we set $y_i$ and all $d'_{ij}$'s
to be $0$. Then, we let
$$\eta = \min \big(\frac{\varepsilon}{4P},\frac{1}{2}\big);\;\;\;t_j = \max_{i\notin \calI} d_{ij};\;\;\;
\delta_j = \frac{\varepsilon \eta t_j}{kn}.$$
We round down all $d_{ij}$'s to be multiples of $\delta_j$. Denote the rounded values by $d'_{ij}$. This is our
new instance.

It is clear that $d'_{ij}$ are multiples of $\delta_j$. Since $t_j\geq d'_{ij}$ for all $i$ and $j$, we have  $d'_{ij}/\delta_j \le \frac{kn}{\varepsilon \eta }$
and therefore, all $d'_{ij}/\delta_j$ are polynomially bounded.  Since $d'_{ij} \leq d_{ij}$ and $f_j$ are monotone functions, we have for every $y$,
$$
\sum_{j\in [k]} f_j \Big(\sum_{i\in [n]} d'_{ij} y_{i}\Big)
\leq
\sum_{j\in [k]} f_j \Big(\sum_{i\in [n]} d_{ij} y_{i}\Big).
$$
As we observed earlier if $y^*$ is the optimal solution to the original problem, then $y^*_i=0$ for $i\in \calI$, hence $y^*$ is a feasible
solution to the new problem. Consequently,
$$OPT'\leq OPT.$$

We now need to verify that
$$\sum_{j\in [k]} f_j \Big(\sum_{i\in [n]} d_{ij} y_{i}\Big) - \sum_{j\in [k]} f_j \Big(\sum_{i\in [n]} d'_{ij} y_{i}\Big)
\leq \varepsilon \widetilde{OPT} + \varepsilon\sum_{j\in [k]} f_j \Big(\sum_{i\in [n]} d'_{ij} y_{i}\Big).$$
We prove that for every $j$,
\begin{equation}\label{eq:fj-diff}
f_j \Big(\sum_{i\in [n]} d_{ij} y_{i}\Big) - f_j \Big(\sum_{i\in [n]} d'_{ij} y_{i}\Big) \leq \frac{\varepsilon}{k}\, \widetilde{OPT} + \varepsilon f_j \Big(\sum_{i\in [n]} d'_{ij} y_{i}\Big).
\end{equation}
Consider two cases.

\medskip

\noindent I. If $\sum_{i\in [n]} d_{ij} y_{i}< \varepsilon t_j/k$, then
$$f_j \Big(\sum_{i\in [n]} d_{ij} y_{i}\Big)\leq \varepsilon f_j(t_j)/k \leq \varepsilon \widetilde{OPT}/k,$$
since $f$ is a convex function, $f(0)=0$, and $f_j(t_j)= \max_{i\notin \calI} f_j(d_{ij})\leq \widetilde{OPT}$. Hence, inequality (\ref{eq:fj-diff}) holds.

\medskip

\noindent II. Now assume that $\sum_{i\in [n]} d_{ij} y_{i}\geq \varepsilon t_j/k$. Observe, that
$$\sum_{i\in [n]} d_{ij} y_{i} - \sum_{i\in [n]} d'_{ij} y_{i}\leq n\delta_j
= \varepsilon \eta t_j/k \leq \eta \sum_{i\in [n]} d_{ij} y_{i}.$$
Hence (using that $\eta \in [0,1/2]$ and thus $1/(1-\eta)\leq 1 +2\eta$),
$$\sum_{i\in [n]} d_{ij} y_{i}\leq \frac{1}{1-\eta} \sum_{i\in [n]} d'_{ij} y_{i}
\leq (1+2\eta)\sum_{i\in [n]} d'_{ij} y_{i}.$$
We now use Claim~\ref{cl:f-dir} (see below) with 
$$t=\sum_{i\in [n]} d'_{ij} y_{i};\;\;\;\text{and}\;\;\;\eta' =  \frac{\sum_{i\in [n]} d_{ij} y_{i}}{\sum_{i\in [n]} d'_{ij} y_{i}} - 1\leq 2\eta\leq \frac{\varepsilon}{2P}.$$
We get
$$
f_j \Big(\sum_{i\in [n]} d_{ij} y_{i}\Big) - f_j \Big(\sum_{i\in [n]} d'_{ij} y_{i}\Big) \leq
\varepsilon f_j \Big(\sum_{i\in [n]} d'_{ij} y_{i}\Big).
$$
This finishes the proof. It only remains to prove Claim~\ref{cl:f-dir}.
\end{proof}

\begin{claim}\label{cl:f-dir}
Suppose that $f$ is a nonnegative monotonically increasing function such that for every $t\in(0,T]$, $t(\log f(t))'\leq P$. Let $\varepsilon \in [0,1]$
and $\eta' \leq \varepsilon/(2P)$.
Then, for $t\in (0, T/(1+\eta'))$,
$$f((1+\eta' )t) - f(t) \leq \varepsilon f(t).$$
\end{claim}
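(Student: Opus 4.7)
The plan is to convert the hypothesis on the logarithmic derivative into a multiplicative bound on $f$ by integrating, and then finish with two elementary inequalities. Concretely, it suffices to show $f((1+\eta')t)/f(t)\leq 1+\varepsilon$, so I would focus on bounding the ratio directly.

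The first step is to integrate the hypothesis $s(\log f(s))'\leq P$ over $[t,(1+\eta')t]$. Since $t<T/(1+\eta')$ this interval lies inside $(0,T]$, so
$$\log\frac{f((1+\eta')t)}{f(t)}=\int_{t}^{(1+\eta')t}(\log f(s))'\,ds\leq \int_{t}^{(1+\eta')t}\frac{P}{s}\,ds=P\log(1+\eta').$$
Exponentiating gives $f((1+\eta')t)/f(t)\leq (1+\eta')^{P}$. Using $1+\eta'\leq e^{\eta'}$ together with $\eta'\leq\varepsilon/(2P)$, I get
$$(1+\eta')^{P}\leq e^{P\eta'}\leq e^{\varepsilon/2}.$$
Finally I would invoke the elementary inequality $e^{\varepsilon/2}\leq 1+\varepsilon$ valid for $\varepsilon\in[0,1]$: the function $g(\varepsilon)=1+\varepsilon-e^{\varepsilon/2}$ satisfies $g(0)=0$ and $g'(\varepsilon)=1-\tfrac12 e^{\varepsilon/2}$, which is positive on $[0,1]$ (even at $\varepsilon=1$, $g'(1)=1-\sqrt{e}/2>0$), so $g\geq 0$ throughout. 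Chaining these,
$$f((1+\eta')t)\leq (1+\varepsilon)f(t),$$
which is exactly the required inequality.

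The only subtlety is the smoothness required to write $\log f((1+\eta')t)-\log f(t)=\int(\log f)'\,ds$. The claim is stated for any $f$ satisfying the logarithmic-derivative bound, and the condition itself presumes that $(\log f)'$ is defined; if one wants to be careful one notes that a convex increasing positive $f$ (the setting of Theorem~\ref{thm:discr}) is absolutely continuous on every compact subinterval of $\{f>0\}$, so the fundamental theorem of calculus applies and the integration is justified. This is the only potential obstacle; the rest of the argument is a direct chain of inequalities, and no decoupling or probabilistic tools are needed here.
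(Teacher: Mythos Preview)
Your proof is correct and follows essentially the same route as the paper: integrate the logarithmic-derivative bound over $[t,(1+\eta')t]$ to obtain $\log\bigl(f((1+\eta')t)/f(t)\bigr)\leq P\log(1+\eta')\leq P\eta'\leq \varepsilon/2$, then use $e^{\varepsilon/2}-1\leq\varepsilon$ on $[0,1]$. The only cosmetic difference is that you exponentiate before applying $1+\eta'\leq e^{\eta'}$ whereas the paper applies $\log(1+\eta')\leq\eta'$ first; your added remarks justifying the elementary inequality and the use of the fundamental theorem of calculus are fine but not strictly needed.
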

\begin{proof}
Write,
\begin{align*}
\log f((1+\eta' )t) - \log f(t) &= \int_t^{(1+\eta')t} (\log f(s))' ds
\leq \int_t^{(1+\eta')t} \frac{P}{s} ds\\&= P \log\frac{(1+\eta')t}{t} = P\log (1+\eta') \leq P\eta'.
\end{align*}
Thus,
$$\frac{f((1+\eta' )t)}{f(t)}\leq e^{\eta' P} \leq e^{\varepsilon/2},$$
and (since $e^{\varepsilon/2} - 1 \leq \varepsilon$ for $\varepsilon\in[0,1]$)
$$f((1+\eta' )t) - f(t) \leq (e^{\varepsilon/2} - 1) f(t)\leq \varepsilon f(t).$$
\end{proof}

\section{Lemma~\ref{lem:limsup}}
\begin{lemma}\label{lem:limsup}
Suppose that a sequence of integer random variables $S^n$ converges in distribution to an integer random variable $Z$.
Then, for every nonnegative function $\varphi$ with a finite expectation $\Exp[\varphi(Z)]$, we have
$$\sup_{n} \Exp[\varphi(S^n)]\geq \Exp[\varphi(Z)].$$
\end{lemma}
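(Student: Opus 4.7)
The plan is to reduce this to Fatou's lemma applied to the counting measure on $\mathbb{Z}$. Since both $S^n$ and $Z$ are integer-valued, convergence in distribution is equivalent to pointwise convergence of the probability mass functions: for each integer $k$, the singleton $\{k\}$ (equivalently the points $k \pm 1/2$) are continuity points of the CDF of $Z$, so $\Pr(S^n = k) \to \Pr(Z = k)$ as $n \to \infty$. This is the only analytic input beyond manipulations with nonnegative series.

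First I would write the expectations as sums:
$$\Exp[\varphi(S^n)] = \sum_{k \in \mathbb{Z}} \varphi(k)\, \Pr(S^n = k), \qquad \Exp[\varphi(Z)] = \sum_{k \in \mathbb{Z}} \varphi(k)\, \Pr(Z = k).$$
Both are well-defined (possibly infinite) because $\varphi \geq 0$; by hypothesis the second is finite.

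Next I would apply the Fatou lemma for nonnegative series (which is just Fatou's lemma with respect to counting measure): for any array $a_{n,k} \geq 0$,
$$\liminf_{n \to \infty} \sum_{k} a_{n,k} \;\geq\; \sum_{k} \liminf_{n \to \infty} a_{n,k}.$$
Applied with $a_{n,k} = \varphi(k)\Pr(S^n = k)$ and using the pointwise convergence $\Pr(S^n = k) \to \Pr(Z = k)$, this gives
$$\liminf_{n \to \infty} \Exp[\varphi(S^n)] \;\geq\; \sum_{k \in \mathbb{Z}} \varphi(k)\, \Pr(Z = k) \;=\; \Exp[\varphi(Z)].$$
Since $\sup_n \Exp[\varphi(S^n)] \geq \liminf_n \Exp[\varphi(S^n)]$, the conclusion follows.

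There is no real obstacle here; the only subtlety is justifying that weak convergence of integer-valued variables implies pointwise convergence of the masses, which is immediate from the portmanteau theorem because integer-valued distributions are supported on $\mathbb{Z}$ and the atoms are isolated. The nonnegativity of $\varphi$ is what allows us to avoid any uniform integrability hypothesis: we do not need $\Exp[\varphi(S^n)] \to \Exp[\varphi(Z)]$, only the one-sided bound $\liminf \geq \Exp[\varphi(Z)]$, and that is exactly what Fatou delivers.
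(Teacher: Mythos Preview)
Your proof is correct and takes a somewhat different route from the paper's. The paper truncates $\varphi$ to $\varphi_t(x)=\varphi(x)\mathbf{1}\{x\le t\}$, which is nonzero on only finitely many integers (in context the variables are nonnegative), so $\Exp[\varphi_t(S^n)]\to\Exp[\varphi_t(Z)]$ is immediate; it then uses $\varphi\ge\varphi_t$ and a double supremum over $n$ and $t$ to reach $\Exp[\varphi(Z)]$. You instead invoke Fatou's lemma for counting measure in one stroke. Your argument is cleaner and slightly more general (it works for two-sided integer supports without adjusting the truncation), while the paper's argument is more self-contained---it does not name any theorem, at the price of essentially rederiving the Fatou bound in this special case. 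Both rely on the same analytic input you identified: convergence in distribution of integer-valued variables forces pointwise convergence of the masses.
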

\begin{proof}
Let $\varphi_t (x) = \varphi(x)$ for $x\leq t$, and $\varphi_t(x)=0$, otherwise. The function $\varphi$ is nonzero on finitely many integral points.
Hence, $\lim_{n\to \infty} \Exp[\varphi_t(S^n)] = \Exp[\varphi_t(Z)]$ for every fixed $t$. Observe that $\Exp[\varphi(S^n)]\geq \Exp[\varphi_t(S^n)]$,
since $\varphi(x)\geq \varphi_t(x)$ for all $x$. On the other hand, $\lim_{t\to \infty}\Exp[\varphi_t(Z)] = \Exp[\varphi(Z)]$. Hence,
\begin{multline*}
\sup_{n} \Exp[\varphi(S^n)]\geq \sup_{n} \sup_{t} \Exp[\varphi_t(S^n)] =  \sup_{t} \sup_{n} \Exp[\varphi_t(S^n)]\geq\\
\sup_{t} \lim_{n\to\infty} \Exp[\varphi_t(S^n)] = \sup_t \Exp[\varphi_t(Z)]=\Exp[\varphi(Z)].
\end{multline*}
\end{proof}

\section{Convex Order}
For completeness, we give proofs of Lemma~\ref{lem:sumX-less-sumY} and Lemma~\ref{lem:sum-aX-less-aY} in this section. The reader may also
find slightly different proofs of these lemmas in the book of Shaked and Shanthikumar~\cite{SS-Book}.

\begin{lemma}\label{lem:XYZ}
Consider three independent random variables $X$, $Y$, and $Z$. If $X\cx Y$, then $Z+X\cx Z + Y$.
\end{lemma}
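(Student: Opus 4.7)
The plan is to prove this by conditioning on $Z$ and using the translation-invariance of convexity. Concretely, for any convex function $\varphi:\bbR\to\bbR$ I want to show $\Exp[\varphi(Z+X)]\leq \Exp[\varphi(Z+Y)]$, and then the conclusion $Z+X\cx Z+Y$ follows directly from the definition of the convex order.

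The key observation is that for each fixed real number $z$, the shifted function $\varphi_z(x):=\varphi(z+x)$ is again convex in $x$. Therefore, by the assumption $X\cx Y$ applied to the convex function $\varphi_z$, we have $\Exp[\varphi_z(X)]\leq \Exp[\varphi_z(Y)]$, i.e.\ $\Exp[\varphi(z+X)]\leq \Exp[\varphi(z+Y)]$ for every $z$ in the support of $Z$ (assuming both expectations exist).

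Now I invoke the independence of $Z$ from $X$ and from $Y$. Using Fubini / the tower property,
\[
\Exp[\varphi(Z+X)] = \Exp\bigl[\,\Exp[\varphi(Z+X)\mid Z]\,\bigr] = \Exp\bigl[g_X(Z)\bigr],
\]
where $g_X(z):=\Exp[\varphi(z+X)]$, and similarly $\Exp[\varphi(Z+Y)]=\Exp[g_Y(Z)]$ with $g_Y(z):=\Exp[\varphi(z+Y)]$. By the previous step, $g_X(z)\leq g_Y(z)$ pointwise, so taking expectation over $Z$ preserves the inequality and gives $\Exp[\varphi(Z+X)]\leq \Exp[\varphi(Z+Y)]$. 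Since $\varphi$ was an arbitrary convex function, this establishes $Z+X\cx Z+Y$.

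There is no real obstacle here; the only minor points to be careful about are (i) verifying that the relevant expectations exist whenever the expectations in the conclusion do — which follows from the standing convention in the definition of $\cx$ that (\ref{eq:def-cx}) is required only ``whenever both expectations exist'' — and (ii) the use of Fubini, which is justified by independence of $Z$ from $X$ and from $Y$. Note that we may, by Lemma~\ref{lem:simplify}, restrict attention to convex $\varphi$ with $\varphi(0)=0$ if desired, but this is not needed for the argument.
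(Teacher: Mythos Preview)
Your proof is correct and is essentially identical to the paper's: both condition on $Z$, use that $x\mapsto\varphi(x+z)$ is convex for each fixed $z$ to apply $X\cx Y$, and then take the outer expectation over $Z$. The paper compresses this into a single displayed chain $\E{\varphi(X+Z)}=\Exp\Exp[\varphi(X+Z)\mid Z]\leq\Exp\Exp[\varphi(Y+Z)\mid Z]=\E{\varphi(Y+Z)}$, while you spell out the intermediate functions $g_X,g_Y$, but the argument is the same.
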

\begin{proof}
For every convex function $\varphi:\bbR\to \bbR$, we have
$$\E{\varphi (X+Z)} = \Exp \Exp[\varphi (X+Z)\given Z] \leq
\Exp \Exp[\varphi (Y + Z)\given Z] = \E{\varphi (Y + Z)}.$$
The inequality above holds, since for any fixed $Z=z$, the function $x\mapsto \varphi (x+z)$ is
convex.
\end{proof}
\begin{proof}[Proof of Lemma~\ref{lem:sumX-less-sumY}]
Using Lemma~\ref{lem:XYZ}, we replace $X_i$'s with $Y_i$'s in the sum $X_1+\dots+X_n$ one by one: For every $k$, let
$$
S_k = \sum_{i=1}^k X_i +  \sum_{j=k+1}^n Y_i.
$$
Then, by Lemma~\ref{lem:XYZ},
$$S_k = \Big(\sum_{i=1}^{k-1} X_i +  \sum_{j=k+1}^n Y_j\Big) + X_k \cx \Big(\sum_{i=1}^{k-1} X_i +  \sum_{j=k+1}^n Y_j\Big) + Y_k = S_{k-1}.$$
We have
$$\sum_{i=1}^n X_i = S_n \cx S_{n-1}\cx\dots \cx S_1 =\sum_{i=1}^n Y_i.$$
This finishes the proof.
\end{proof}

\begin{proof}[Proof of Lemma~\ref{lem:sum-aX-less-aY}]
Consider an arbitrary convex function $\varphi$. Let $A = \sum_{i=1}^n a_i$ and $\tilde{\varphi}(x) = \varphi(Ax)$. Since $\tilde{\varphi}$ is a convex function,
we have
$$\tilde{\varphi}\Big(\frac{1}{A}\sum_{i=1}^n a_i X_i\Big) \leq \frac{1}{A}\sum_{i=1}^n a_i \tilde{\varphi}(X_i),$$
and $\E{\tilde{\varphi}(X_i)}\leq \E{\tilde{\varphi}(Y)}$. Hence,
$$\E{\tilde{\varphi}\Big(\frac{1}{A}\sum_{i=1}^n a_i X_i\Big)} \leq \E{\frac{1}{A}\sum_{i=1}^n a_i \tilde{\varphi}(X_i)} = \frac{1}{A}\sum_{i=1}^n a_i \E{\tilde{\varphi}(X_i)}
\cx \frac{1}{A}\sum_{i=1}^n a_i \E{\tilde{\varphi}(Y)} = \E{\tilde{\varphi}(Y)}.$$
Substituting $\tilde{\varphi} (x) = \varphi(Ax)$, we get
$$\E{\varphi\Big(\sum_{i=1}^n a_i X_i\Big)} \leq \E{\tilde{\varphi}\Big(\sum_{i=1}^n a_i Y\Big)}.$$
\end{proof}

\section{Continuous Random Variables}\label{sec:crv}
In Section~\ref{sec:thm:decoupling}, we proved Theorem~\ref{thm:CX-main} for discrete random variables. In this section, we extend this result to
arbitrary random variables. Consider two sequences of nonnegative random variables $X_1,\dots, X_n$ and $Y_1,\dots, Y_n$ satisfying
conditions of Theorem~\ref{thm:CX-main}. Fix an arbitrary convex function $\varphi:\bbR\to\bbR$. We need to show
that
\begin{equation}\label{eq:crv:phi}
\Exp[\varphi(X_1+\dots+X_n)] \leq \Exp[\varphi(P\cdot (Y_1+\dots +Y_n))]
\end{equation}
assuming that both expectations exist. Observe that $\varphi$ can be represented as the sum of two convex functions: a monotonically non-decreasing
convex function $\fup$ and monotonically non-increasing convex function $\fdown$. It suffices to show that
\begin{align}
\Exp[\fup  (X_1+\dots+X_n)] &\leq \Exp[\fup  (P\cdot (Y_1+\dots +Y_n))];\label{eq:crv-phiUp}\\
\Exp[\fdown(X_1+\dots+X_n)] &\leq \Exp[\fdown(P\cdot (Y_1+\dots +Y_n))].\label{eq:crv-phiDown}
\end{align}
Note that if the expectations in (\ref{eq:crv:phi}) exist than the expectations in (\ref{eq:crv-phiUp}) and (\ref{eq:crv-phiDown})
also exist. We now prove inequality~(\ref{eq:crv-phiUp}). The proof of (\ref{eq:crv-phiDown}) is almost the same.
For natural $M$, define a function $g_M$ as follows:
$$g_M(x) =
\begin{cases}
\frac{\rounddown{M x}}{M},&\text{if } x \leq M;\\
0,&\text{otherwise}.
\end{cases}
$$
The function $g_M$ truncates $x$ at the level $M$ and then rounds $x$ down to the nearest multiple of $1/M$.
Observe, that $\lim_{M\to \infty} g_M(x) = x$ for every $x$. Hence, $(g_{M}(X_1) +\dots + g_{M}(X_n))$
converges a.s. to $X_1+\dots +X_n$ as $M\to\infty$; and $P(g_{M}(Y_1)+\dots +g_{M}(Y_n))$ converges a.s.
to $P(Y_1+\dots +Y_n)$ as $M\to \infty$. Since, $\fup$ is a continuous function
$\fup(g_{M}(X_1)+\dots +g_{M}(X_n))\xrightarrow{a.s.} \fup(X_1+\dots+X_n)$ and
$\fup(P(g_{M}(Y_1)+\dots +g_{M}(Y_n)))\xrightarrow{a.s.} \fup(P(Y_1+\dots+Y_n))$.
Notice that $g_M(x)\leq x$ for all $x$. Hence,
$\fup(g_{M}(X_1)+\dots +g_{M}(X_n))\leq \fup(X_1+\dots+X_n)$
and
$\fup(P(g_{M}(Y_1)+\dots +g_{M}(Y_n)))\leq \fup(P(Y_1+\dots+Y_n))$.
By Lebesgue's dominated convergence theorem, we get
\begin{align}
\lim_{M\to \infty} \Exp[\fup(g_{M}(X_1)+\dots +g_{M}(X_n))]&=\Exp[\fup(X_1+\dots+X_n)];\label{eq:coverg1}\\
\lim_{M\to \infty} \Exp[\fup(P(g_{M}(Y_1)+\dots +g_{M}(Y_n)))] &=  \Exp[\fup(P(Y_1+\dots+Y_n))].\label{eq:coverg2}
\end{align}
For every fixed $M$, the left hand side of (\ref{eq:coverg1}) is upper bounded by the left hand side of
(\ref{eq:coverg2}), because random variables $g_{M}(X_1),\dots, g_{M}(X_n)$ and $g_{M}(Y_1),\dots,g_{M}(Y_n)$ are
discrete and satisfy the conditions of Theorem~\ref{thm:CX-main}. Hence, the right hand side of (\ref{eq:coverg1})
is upper bounded by the right hand side of (\ref{eq:coverg2}). This proves inequality~(\ref{eq:crv-phiUp})
and concludes the proof of Theorem~\ref{thm:CX-main} for arbitrary random variables.

\section{Figures}\label{sec:figures}

\begin{figure}[ht]
\center{\begin{tabular}{|c|c|c|c|c|c|c|c|c|c|}
\hline
$q=$   & 1 &  1.25  &  1.5   &  1.75  & 2 \\ \hline
$A_q=$ & 1 &  1.163 &  1.373 &  1.645 & 2\\
\hline
\end{tabular}}
\caption{\label{fig:values} The values of $A_q$ for some $q\in [1,2]$. All values are rounded up to three decimal places.}
\end{figure}

\begin{figure*}[p]
\center{\includegraphics[width=350pt]{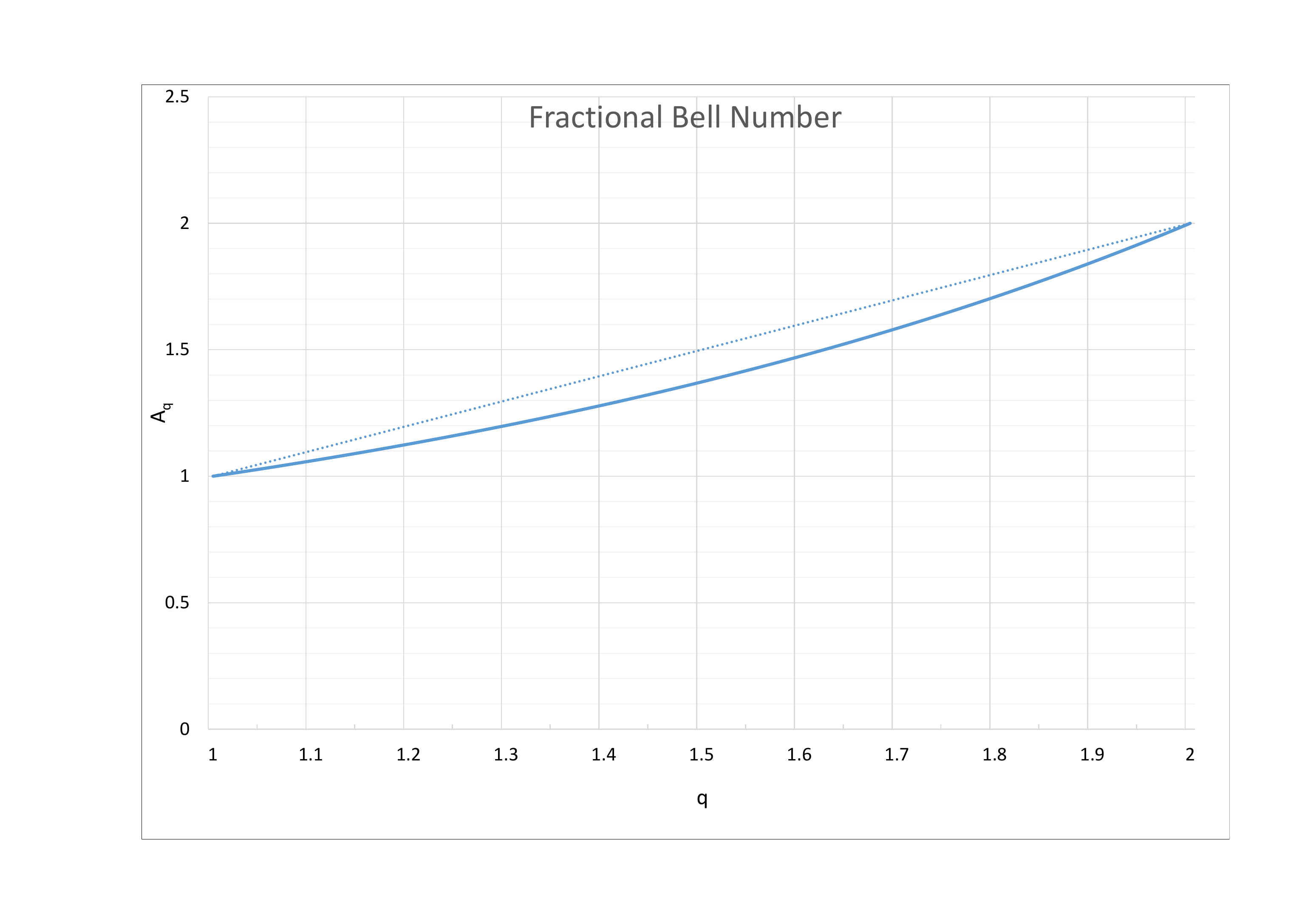}}
\caption{\label{fig:plot-Aq} Graph of $A_q$ for $q\in [1,2]$. Note that the function $q\mapsto A_q$ is convex; $A_1 = 1$ and $A_2 =2$. Thus, $A_q\leq q$ for $q\in [1,2]$.}
\end{figure*}

\begin{figure*}[p]
\center{\includegraphics[width=350pt]{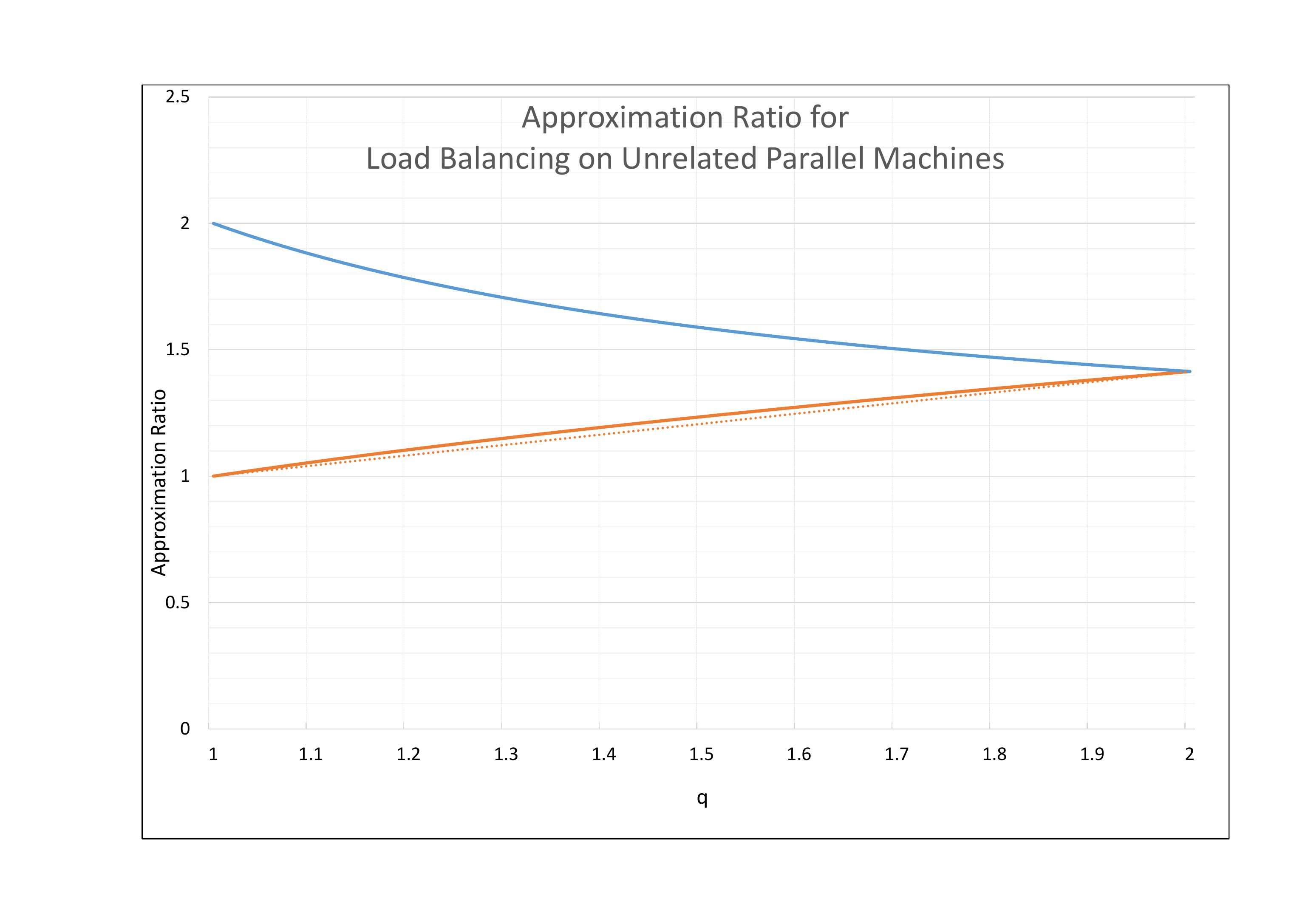}}
\caption{\label{fig:plot-results} Approximation factor of our algorithm for Load Balancing on Unrelated Parallel Machines -- $A_q^{\nicefrac{1}{q}}$ --
is plotted in red (below). Approximation factor of the algorithm due to Kumar, Marathe, Parthasarathy and Srinivasan~\cite{KMPS} -- $2^{\nicefrac{1}{q}}$ -- is plotted in
blue (above). The function $A_q^{\nicefrac{1}{q}}$ can be well approximated by the linear function
$1 + (\sqrt{2}-1)(q-1)$ in the interval $q\in [1,2]$.}
\end{figure*}
\end{document}